\renewcommand\today{\ifcase\month\or
  January\or February\or March\or April\or May\or June\or
  July\or August\or September\or October\or November\or December\fi
	\space\number\day, \number\year}
\def\currenttime{%
	\minute\time
	\hour\minute
	\divide\hour60
	\the\hour:\multiply\hour60\advance\minute-\hour\the\minute}
\def\draftnote{printed on \today, time \currenttime;\quad file name: \jobname}%
\newtheorem{theorem}{Theorem}[section]
\newtheorem{teo}[theorem]{Theorem}
\newtheorem{lm}[theorem]{Lemma}
\newtheorem{cor}[theorem]{Corollary}
\newtheorem{pr}[theorem]{Proposition}
\newtheorem{ex}[theorem]{Example}
\begin{document}
\title[Public-key cryptosystem based on invariants]{Public-key cryptosystem based on invariants of diagonalizable groups}
\thanks{This publication was made possible by a NPRF award NPRP 6 - 1059 - 1 - 208 from the Qatar National Research Fund (a member of The Qatar Foundation). The statements made herein are solely the responsibility of the authors.}
\author{Franti\v sek~ Marko}
\email{fxm13@psu.edu}
\address{Penn State Hazleton, 76 University Drive, Hazleton, PA 18202, USA}
\author{Alexandr N. Zubkov}
\email{a.zubkov@yahoo.com}
\address{Sobolev Institute of Mathematics, Siberian Branch of Russian Academy of Science (SORAN), Omsk, Pevtzova 13, 644043, Russia}
\author{Martin Jur\'a\v s}
\email{martinjuras@gmail.com}
\address{Savannah College of Art and Design, Department of General Education, Savannah, GA 31402, USA}
\begin{abstract}
We develop a public key cryptosystem based on invariants of diagonalizable groups and investigate properties of such cryptosystem first over finite fields, then over number fields and finally over finite rings.
We consider the security of these cryptosystem and show that it is necessary to restrict the set of parameters of the system to prevent various attacks (including linear algebra attacks and attacks based on Euclidean algorithm).
\end{abstract}
\keywords{cryptosystem, invariants, diagonalizable group, number field, supergroup}
\subjclass[2010]{94A60(primary), and 11T71(secondary)}
\maketitle
\section*{Introduction}

A new idea for a public-key cryptosystem based on the invariant theory was proposed by Grigoriev in \cite{dima1}. His original idea was later developed in the paper \cite{dima2}.
The last paragraph of the paper \cite{dima1} reads as follows:

"The current state of the art in cryptography does not allow one to prove the security of cryptosystems; this is usually a question of belief in the difficulty of a revelant problem and a matter of experience (that is why it is not quite unusual to have a paper on cryptography without theorems, for example, this paper).
Quite the opposite, one can expect a "disappointing" breaking of a particular cryptosystem. This can happen for any of the afforementioned examples (without solving the graph isomorphism problem, see the discussion above). On the other hand, such breaking could lead to interesting algorithms in the theory of group representations. Thus one can treat the above examples (and the general construction as a whole) just as a suggestion to play with cryptosystems based on the invariant theory."

The purpose of our paper is to develop and design a public-key cryptosystem based on invariants of diagonalizable groups.
We go beyond the philosophy of the preceeding quote and design a concrete public-key cryptosystem, present an algorithm for its implementation and
show how to break systems based on invariants of some groups.

At first we consider these cryptosystems over finite fields $F$, then we investigate cryptosystems over fields of characteristic zero (in particular, over number fields), and finally we work with cryptosystems over finite rings (in particular, residue classes of number fields modulo an integer).  Each part is distinguished by distinctive properties. For example, cyclicity of the multiplicative group $F^\times$ plays the most important role over finite fields, the theory of divisors and factorization properties are most important for the number fields, and both properties are important when we work over finite rings. One property that remains valid in all cases is that if $G$ not cyclic then it produces more complicated (and secure) cryptosystems as compared to the case when $C$ is cyclic.

Finding an invariant of the group $G$ is trivial in the finite field case, and what is challenging is to find one separating vectors from the given set $S$. We show a simple example when $n=2$ for which the security of the cryptosystem is equivalent to the discrete logarithm assumption. 
Over number fields, the main problem is to find an invariant of $G$ and the problem of the separation of elements of $S$ can be neglected.
The cryptosystems over finite rings combine features of the previous two cases and further investigation of their properties will be necessary. 

Finally, our work on this cryptosystem leads to an investigation of interesting mathematical problems related to the security of the invariant-based cryptosystem.
Theoretical results about related mathematical concepts like minimal degrees of invariants and invariants of supergroups will appear in a separate paper \cite{jmz2}.

\section{Invariants of finitely-generated linear groups}

In this paper, we will consider only finitely generated groups $G$ acting faithfully on a finite-dimensional vector space $V=F^n$ over a field $F$ of arbitrary characteristic.
Therefore we can asume that $G\subset GL(V)$. From the very beginning, assume that the representation $\rho:G\to GL(V)$ is fixed, and the group $G$ is given by a finite set of generators. With respect to the standard basis of $V$, each element $g$ of $G$ is therefore represented by an invertible matrix of size $n\times n$, and $g$ acts on vectors in $V$ by matrix multiplication.

Let $F[V]=F[x_1, \ldots, x_n]$ be the algebra of polynomial functions on $GL(V)$. Then $G$ acts on $F[V]$ via
$gf(v)=f(g^{-1}v)$, where $g\in G$, $f\in F[V]$ and $v\in V$.
An invariant $f$ of $G$ is a polynomial $f\in F[V]$
which has a property that its values are the same on orbits of the group $G$. In other words, for every vector $v\in V$ and for every element $g\in G$, we have $f(gv)=f(v)$.
We note that different representations of $G$ lead to different invariants in general but this is not going to be a problem for us since our representation of $G$ is fixed.
We will denote the algebra of invariants of $G$ by $F[V]^G$.

\section{Public key-cryptosystem based on invariants}\label{system}

We start by recalling the original idea of the public-key cryptosystem based on invariants from the paper \cite{dima1} and recalling its modification presented in \cite{dima2}.

\subsection{Design of cryptosystems based on invariants of groups}

To design a cryptosystem, Alice needs to choose a finitely generated subgroup $G$ of $GL(V)$ for some vector space $V=F^n$ and a set $\{g_1, \ldots, g_s\}$ of generators of $G$. 
Alice also chooses an $n\times n$ matrix $a$. Alice needs to know an invariant $f$ of this representation of $G$. Depending on this invariant $f$, Alice chooses a set $M=\{v_0, \ldots, v_{r-1}\}$ of vectors from $V$ such that the set $S=aM=\{av_0,\ldots, av_{r-1}\}$ is separated by the invariant $f$. This means that $f(av_i)\neq f(av_j)$ whenever $i\neq j$.
The set $M$ represents messages Alice can receive and elements of the set $S$ are bijectively assigned to elements of $M$. The sets $S$ is a part of the public key. 

Alice also chooses a set of randomly generated elements $g_1, \ldots, g_m$ of $G$ (say, by multiplying some of the given generators of $G$), which generates a subgroup of $G$ that will be denoted by $G_s$. 

Alice announces as a public key the set $S$ representing possible messages, and the group $H=a^{-1}G_sa$, conjugated to $G_s$, by announcing its generators $h_i=a^{-1}g_ia$ for $i=1, \ldots, m$.

In the first paper \cite{dima1} its author assumes that the group $G$, its representation in $GL(V)$ and the invariant $f$ are in the public key. We refer to this setup as {\it variant one}. However, the version in paper \cite{dima2} assumes that $G$, its representation in $GL(V)$ and the invariant $f$ are secret.  We refer to this setup as {\it variant two}. We will comment on both variants later.

For the encryption, every time Bob wants to transmit a message $m\in M$, he looks up the correponding element of $v_i\in S$ and chooses a randomly generated element $h$ of the group $H$(by multiplying some of the generators of $H$ given as a public key). 
Then he computes $u=hv_i$ and transmits the vector $u\in V$ to Alice.

To decript the message, Alice first computes $au$ and then applies the invariant $f$. If $u=hv_i$, then $f(au)=f(ahv_i)=f(aa^{-1}gav_i)=f(gav_i)=f(av_i)$. Since $a$ was chosen so that
$f(av_i)\neq f(av_j)$ whenever $i\neq j$, Alice can determine from the value of $f(au)$ whether the symbol $v_i$ and the corresponding message that was encrypted by Bob.

Let us discuss briefly the choices of $n$, $F$, $G$ and $S$.

It appears that choosing large $n$ would be better for the security of the cryptosystem but it would increase the expansion in size from plaintext to ciphertext.

The bigger and more complicated the structure of $F$, the better for the security of the cryptosystem. Analogously, the more complicated structure of $G$, in particular $G$ not cyclic would be preferred. 

Finally, we should choose the set $S$ as large as possible for two reasons. The first reason is that larger set $S$ shrink the number of invariants of $C$ that separate elements of $S$ and thus increases the security of the cryptosystem.
The second reason is that larger set $S$ decreases the ratio of the expansion in size from plaintext to ciphertext for the encryption using this cryptosystem.

\subsection{Previously described attacks on the cryptosystem based on invariants of groups}

Let us note that it is important that during the encryption process by Bob he uses all generators $h_i$ for scrambling the message. If some generators are not involved, then to decode his message Charlie would succeed if he finds an invariant of a subgroup of $H$, which is an easier task.

The attacks described below are mentioned in \cite{dima1} and \cite{dima2}. We are providing their description for the convenience of the reader and for further clarification. 
Also, these attacks we previously described only for the case when $|M|=2$ and we adapt them to the case when $|M|=r$.

To break the encryption, it is enough for Charlie to find any invariant $f'$ of the group $H$ that separates elements of $S$. 
If $r>2$ then we can replace this by a weaker condition. Namely, it is enough to find $f'$ such that $f'(u)=f'(v_i)$ for a unique $v_i\in S$.

Indeed if Charlie computes $f'(u)=f'(hv_i)=f'(v_i)$ and then compares $f'(u)$ with $f'(v_i)$. If there is a unique vector $v_i$ such that $f'(u)=f(v_i)$, then the message corresponding to $v_i$ was sent by Bob.

\subsubsection{Variant one}

Consider variant one of the cryptosystem - that is, the group $G$, its representation in $GL(V)$ and an invariant $f$ are known. We can assume that $f$ is a homogeneous polynomial of degree $d$. In this case, it is known that there is a homogeneous invariant $f'$ of $H$ of degree $d$ that is of the form $f'(v)=f(bv)$ for some matrix $b\in GL(V)$.
Then $f'$ is an invariant of $H$ if and only if $f(bh_iv)=f'(h_iv)=f'(v)=f(bv)$ for each generator $h_i$, where $i=1, \ldots, m$ of $H$.
Comparing coefficients at $\binom{n+d-1}{d}$ monomials we obtain $m\binom{n+d-1}{d}$ linear equations in $n^2$ variables (entries of $b$). Any solution of this system produces an invariant of $H$.

Another possible way to attack the system is to find a matrix $b\in GL(V)$ such that $bHb^{-1}\subset G$. This technique is related to the conjugacy problem for matrix groups and the graph isomorphism problem.

\subsubsection{Variant two}

In variant two of the cryptosystem, the group $G$, its representation in $GL(V)$ and the invariant $f$ are secret. However, Charlie can attempt to find an invariant $f'$ directly
by choosing a possible degree $d$ and solving linear systems derived from the equations $f'(h_iv)=f'(v)$ for each generator $h_i$, where $i=1. \ldots, m$.
This produces a linear system consisting of $m\binom{n+d-1}{d}$ equations in the $\binom{n+d-1}{d}$ unknowns that are the coefficients at monomials in $f'$.

Another approach is to find a matrix $h\in H$ such that $hu=v_i$ for some $i$ (attempting to recover the encryption done by Bob). This problem is related to the vector transporter problem and the graph isomorphism problem - see \cite{dima1}. Let us note that it was announced reently in \cite{bab} that the graph isomorphism problem 
can be solved in quasipolynomial time.

\section{Cryptosystems over finite fields $F$}

In this section we will discuss cryptosystems based in invariants of groups over finite fields $F$. We will present concrete examples and show how the security of those cryptosystems 
is guaranteed if we assume computational hardness of the dicrete logarithm problem.

\subsection{$n=1$}

For simplicity, in the case $n=1$, we will assume that the cardinality of the set $S$ is $2$.

The case $n=1$ is singular and it implies $G\subset F$. If there is a nonconstant invariant $f=p(x)$ of $G$ that attains the constant value $c$ when evaluated on each element of $G$, then $G$ is a subset of the set of roots of the polynomial $p(x)-c$. In particular, $G$ is finite.

Let $F$ be a finite field $GF(q)$ of characteristic $p>0$ and cardinality $q=p^b$. The set of non-zero elements $F^{\times}$ of $F$ with respect to the multiplication in $F$ form a cyclic group generated by a primitive element $a$. Also, $F$ is isomorphic to the splitting field of the polynomial $x^q-x=0$ over the prime field $GF(p)$.
In particular, $x^q-x$ is an invariant of $F$ and $x^{q-1}$ is an invariant of $F^{\times}$ attaining the value $1$.

Since $F^{\times}$ is cyclic, every subgroup $G_d$ of $F^{\times}$ is also cyclic, generated by $a^{\frac{q-1}{d}}$, given as a set of roots of $x^d-1=0$ for some $d|(q-1)$.  The set of invariants of $G$ is generated by $x^d$.
If $d\neq q-1$, we can choose $v_0$ and $v_1$ from $F^{\times}$ such that $v_0^d\neq v_1^d$. In particular, any choice $v_0\in G_d$ and $v_1\notin G_d$ will do. Then the invariant $f=x^d$ of $G$ separates $v_0$ and $v_1$.

For setting up the corresponding cryptosystem over a finite field $F$ we need to select an element $g\in F$ (that generates $G$) and find an exponent $d'<q-1$ such that $g^{d'}=1$.
Then we can select for $v_0$ any element of $G$ (that is a power of $g$). We also need to find an element $v_1\in F$ that does not belong to $G$. Since we do not know the order $d$ of
$g$, the simplest way to guarantee this condition is to make sure that $v_1^{d'}\neq 1$. Then $x^{d'}$ is an invariant of $G$ separating $v_0$ and $v_1$.

To break such a system, we need, for a given group $G$ and vectors $v_0$ and $v_1$, to find an invariant of $G$ separating $v_0$ and $v_1$. An invariant of $G$, namely
the polynomial $x^{q-1}$ is known from the beginning but the problem is to find one that separates $v_0$ and $v_1$.

If there is a algorithm that determines the order any element $g\in F$ in polynomial time, then the cryptosystem can be setup and broken in polynomial time, hence it is not secure.
Even if there is no algorithm that determines the order of $g$ in polynomial time, it might be possible to find a separating invariant and break the cryptosystem randomly.

Consider the following example.

\begin{ex}
Let $s$ be a prime dividing $q-1$, $d=\frac{q-1}{s}$, $G=G_d$ is generated by $g$, $v_0=g$ and $v_1=a$ ($a$ is a primitive element of $F^{\times}$).
Then the only invariant of $G$ that separates $v_0$ and $v_1$ is $x^d$. Therefore breaking of this cryptosystem is equivalent to finding of the order of $g$, and is also equivalent to finding of the prime factor $s$ of $q-1$. Therefore we conclude that breaking of all cryptosystems of this type is equivalent to finding of all prime divisors of $q-1$ (hence finding of the prime factorization of $q-1$.)
\end{ex}

We give a brief review of the computational complexity of factorization of integers in the next subsection. 

The value of the above example is in showing that even if an invariant of $G$ is known, it might not be completely trivial to find an invariant of $G$ separating $v_0$ and $v_1$.

Of course we can find the order of $g$ and break this crytosystem using the discrete logarithm (although it might be easier just to find the prime factorization of $q-1$).
Since the multiplicative group of $F$ has a primitive element $a$, and the multiplicative group of $G$ has generator $g$,
we can use the discrete logarithm to find out the exponent $h$ such that $g=a^h$. Then $g$ is the primitive $d=\frac{q-1}{GCD(h, q-1)}$-th root of unity.
Once we know the order $d$ of $g$, we have found the invariant $x^d$, separating $v_0$ and $v_1$.

Nevertheless, we will not use the case $n=1$ to setup a cryptosystem due to concerns about its security.
The reason is that in the original setup of the system some partial information about the order $d$ of $g$ is required. Only a partial information about the order $d$ is required to break such a cryptosystem and we are not aware of an effective setup when it is easy to create such a cryptosystem and difficult to break it.


\subsection{Computational complexity of the factorization of integers}

We refer to \cite{vas} for the description of various algorithms and their complexity.

First we will overview the determininistic algorithms for factorization of integers.

One of the simplest is the Fermat algorithm that works fast $n=pq$ is a product of two primes that are of the same magnitude.
The most popular detemininistic algorithms for factorization of integers (all of them of exponential complexity) are $(p-1)$-method, Pollard's $\rho$-method and the Pollard-Strassen algorithm. These algorithms are often used to find small prime factors. For more details and description of other algorithms, see Chapter 2 of \cite{vas}.

When working over finite field $F=GF(q)$, where $q=p^k$ for a prime $p$, the following theorem helps to determine the factorization of $q-1$.

\begin{teo}(Theorem 2.21 of \cite{vas})
Let $b,k\in \mathbb{N}$, $b>1$, and $n=b^k-1$. If $p$ is a prime number dividing $n$, then one of the following two assertions holds:

1. $p|b^d-1$ for some $d<k, d|k$;

2. $p\equiv 1 \pmod k$.

If $p>2$ and $k$ is odd, then, in the second case $p\equiv 1 \pmod{2k}$.
\end{teo}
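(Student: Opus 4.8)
The plan is to analyze a prime divisor $p$ of $n = b^k - 1$ through the multiplicative order of $b$ modulo $p$. Since $p \mid b^k - 1$, we have $\gcd(b,p) = 1$, so $b$ represents a unit in the group $(\mathbb{Z}/p\mathbb{Z})^\times$. Let $e$ denote the multiplicative order of $b$ modulo $p$, that is, the least positive integer with $b^e \equiv 1 \pmod p$. The congruence $b^k \equiv 1 \pmod p$ forces $e \mid k$, and this single divisibility relation will drive the entire argument.

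First I would split into two cases according to whether $e < k$ or $e = k$. If $e < k$, then setting $d = e$ gives a divisor $d \mid k$ with $d < k$ and $b^d \equiv 1 \pmod p$, i.e.\ $p \mid b^d - 1$; this is assertion~1. If instead $e = k$, then I invoke Lagrange's theorem (equivalently Fermat's little theorem): the order $e$ of any element divides the group order $|(\mathbb{Z}/p\mathbb{Z})^\times| = p - 1$, whence $k = e \mid p - 1$, that is $p \equiv 1 \pmod k$; this is assertion~2. Every prime divisor of $n$ therefore falls into (at least) one of the two stated cases.

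For the sharpening under the hypotheses $p > 2$ and $k$ odd, I would work within the second case, where $k \mid p - 1$ is already established. Since $p$ is an odd prime, $p - 1$ is even, so $2 \mid p - 1$. As $k$ is odd, $\gcd(2, k) = 1$, and two coprime divisors of an integer have their product dividing that integer; hence $2k = \operatorname{lcm}(2, k) \mid p - 1$, giving $p \equiv 1 \pmod{2k}$ as required.

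There is no serious obstacle in this argument; the only points demanding care are the observation that $p \nmid b$, which guarantees that the order $e$ is well defined, and, in the refinement, the use of $\gcd(2,k) = 1$ to upgrade the two separate divisibilities $2 \mid p-1$ and $k \mid p - 1$ into the single divisibility $2k \mid p - 1$. The whole proof rests on the elementary theory of the multiplicative order together with Lagrange's theorem for the group $(\mathbb{Z}/p\mathbb{Z})^\times$.
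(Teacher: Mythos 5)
Your proof is correct: the case split on whether the multiplicative order $e$ of $b$ modulo $p$ satisfies $e<k$ or $e=k$, combined with Lagrange's theorem to get $k\mid p-1$ in the second case and the coprimality $\gcd(2,k)=1$ to upgrade to $2k\mid p-1$, is exactly the standard argument for this classical fact. Note that the paper itself supplies no proof --- it quotes the result verbatim as Theorem 2.21 of the cited book of Vasilenko --- so there is no in-paper argument to compare against; your proposal fills that gap with the expected textbook proof, and it has no errors or omissions.
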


Although this statement seems easy to use, in reality it gives an algorithm of exponential complexity.

Probabilistic algorithms for factorization of integers with subexponential complexity are discussed in details in Chapter 3 of \cite{vas}. The complexity of these algorithms is of the form $L_n[\gamma, c]$, where $\gamma=\frac12$ or $\frac13$ and $c$ is a positive constant, where
\[L_x[\gamma,c]=e^{(c+o(1))(log \, x)^{\gamma}(log \, log \, x)^{1-\gamma}}\]
and $o(1)\to 0$ as $n\to \infty$.

Most popular algorithms of this nature are Lenstra eliptic curve method, quadratic sieve and number field sieve. For more details see Chapters 3 and 4 of \cite{vas}.

\subsection{$n=2$ and $G$ cyclic}

Next, we will discuss the case when $G$ is cyclic and $n=2$ and show that breaking of the corresponding cryptosystem is equivalent to solving of a discrete logarithm problem. 

Assume that $F=FG(q)$ is a finite field of cardinality $q$, where $q=p^r$ and $a$ be a primitive element of $F^\times$.

Let a cyclic group $G$ be generated by the element 
$\gamma=\begin{pmatrix} \gamma_1 &0\\0&\gamma_2\end{pmatrix}$, where $\gamma_1=a^{l_1}$ and $\gamma_2=a^{l_2}$ and 
the set $S$ consists of vectors $\vec{v}_i=\begin{pmatrix}a_{i1}\\a_{i2}\end{pmatrix}$ for $i=1, \ldots, s$.

Since every invariant of $G$ is a sum of monomial invariants, to obtain a complete description of all invariants of $G$ we only need to find monomial invariants.
A monomial $f=x_1^{d_1}x_2^{d_2}$ is an invariant of $G$ if $\gamma_1^{d_1x}\gamma_2^{d_2x}$ is constant for every integer $x$. Plugging in $x=q-1$ we get
$\gamma_1^{d_1}\gamma_2^{d_2}=1$ which is equivalent to 
\[d_1l_1+d_2l_2\equiv 0 \pmod{q-1}.\]

A monomial invariant $f=x_1^{d_1}x_2^{d_2}$ separates vectors $\vec{v}_i$ and $\vec{v}_j$ if and only if 
\[a_{i1}^{d_1}a_{i2}^{d_2}\neq a_{j1}^{d_1}a_{j2}^{d_2}.\]

The following is an example of a group $G$ for which finding an invariant $f$ separating vectors from the given set $S$ 
is computationally hard problem - see Proposition \ref{equivalence}.

\begin{ex}\label{ex3.3}
Assume that $F=FG(q)$ is a finite field of cardinality $q$, where $q=p^r$ and $s$ is a (large) prime that divides $q-1$.
Let $\alpha\in F^{\times}$ be an element of order $s$, 
and let $\beta=\alpha^b$ for a secret integer $b$ not divisible by $s$.

Let $V=F^2$, $G\subset GL(V)$ be a cyclic group generated by the element
$g=\begin{pmatrix} \alpha & 0\\ 0&\beta \end{pmatrix}$, and the set $S$ consists of vectors 
$v_i=\begin{pmatrix} 1 \\ a_i\end{pmatrix}$, where $a_i=\alpha^i$ for $i=0, \ldots, s-1$.
Consider the cryptosystem based on this group $G$ and the set $S$.
\end{ex}

A general element of $G$, written as $g^x$ for some exponent $x$, acts on vectors $v_i$ as
$g^x v_i=\begin{pmatrix} \alpha^x\\ \beta^x a_i \end{pmatrix}=\begin{pmatrix} w_1\\w_2 \end{pmatrix}=\vec{w}$, for $i=0, \ldots, s-1$.

\subsubsection{Discrete logarithm approach}
If the exponent $b$ is known, then we can decode $a_i$ and $v_i$ from $\vec{w}$ simply using $a_i=\frac{w_2}{w_1^b}$.

Otherwise, we can break this cryptosystem if we have an effective algorithm for the discrete logarithm problem in $F^\times$.
Indeed, to determine $i$ it is enough to solve for $y$ such that $(\frac{\beta}{\alpha})^y = \frac{w_2}{w_1a_i}$.

\subsubsection{Invariant approach}

\begin{pr}\label{equivalence}
Finding a monomial invariant $f$ of $G$ from Example \ref{ex3.3} separating elements of $S$ is equivalent to finding $b \pmod s$, hence to finding a solution of the discrete logarithm problem for the pair $\alpha$ and $\beta$. 
\end{pr}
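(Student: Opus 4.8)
The plan is to translate both the invariance condition and the separation condition on a monomial $f=x_1^{d_1}x_2^{d_2}$ into congruences modulo $s$ in the unknown exponents $d_1,d_2$, and then to show that these congruences encode $b \bmod s$ reversibly. Since $g=\mathrm{diag}(\alpha,\beta)$ acts by $gv=(\alpha v_1,\beta v_2)^{T}$, I would first record that $f(gv)=\alpha^{d_1}\beta^{d_2}f(v)$, so that $f$ is an invariant of $G$ precisely when $\alpha^{d_1}\beta^{d_2}=1$. Writing $\beta=\alpha^{b}$ and using that $\alpha$ has order $s$, this is the single linear congruence $d_1+bd_2\equiv 0 \pmod{s}$, in agreement with the general criterion recalled just before the statement.

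Next I would analyze separation. Because $v_i=(1,\alpha^{i})^{T}$, one computes $f(v_i)=\alpha^{i d_2}$, so $f$ distinguishes $v_i$ from $v_j$ if and only if $\alpha^{(i-j)d_2}\neq 1$, i.e. $(i-j)d_2\not\equiv 0\pmod s$. As $i,j$ range over $0,\dots,s-1$ the differences $i-j$ run through all nonzero residues modulo $s$, and since $s$ is prime this holds for every pair $i\neq j$ exactly when $d_2\not\equiv 0\pmod s$. Thus a monomial invariant separates $S$ if and only if $d_1+bd_2\equiv 0\pmod s$ and $d_2\not\equiv 0\pmod s$.

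With this characterization the two reductions are immediate. Given any separating monomial invariant, the exponent $d_2$ is invertible modulo $s$ (as $s$ is prime), so one recovers $b\equiv -d_1 d_2^{-1}\pmod s$ by a single modular inversion. Conversely, given $b\bmod s$, choosing $d_2=1$ and letting $d_1$ be the representative of $-b$ in $\{0,\dots,s-1\}$ yields a valid separating invariant $x_1^{d_1}x_2$. Both directions run in polynomial time, which is the sense in which the two tasks are equivalent; and since $b\bmod s$ is by definition the discrete logarithm of $\beta=\alpha^{b}$ to base $\alpha$ in the order-$s$ subgroup generated by $\alpha$, this simultaneously establishes the equivalence with the discrete logarithm problem for the pair $\alpha,\beta$.

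As for difficulty, there is no serious obstacle: the content lies entirely in the bookkeeping. The one step deserving care is the separation analysis, where one must verify injectivity of $f$ simultaneously for all pairs $v_i,v_j$ rather than a single pair; it is precisely the primality of $s$ together with the index range $0,\dots,s-1$ that collapses the whole family of conditions $(i-j)d_2\not\equiv 0$ into the clean criterion $d_2\not\equiv 0\pmod s$. I would also remark that restricting attention to monomial invariants costs nothing, since every invariant is a sum of monomial invariants, although the proposition as stated already concerns monomial invariants directly.
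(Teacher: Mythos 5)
Your proposal is correct and follows essentially the same route as the paper's proof: both reduce invariance of $x_1^{d_1}x_2^{d_2}$ to the congruence $d_1+bd_2\equiv 0\pmod{s}$, recover $b\equiv -d_1d_2^{-1}\pmod{s}$ from a separating monomial invariant, and conversely construct a separating invariant once $b\bmod s$ is known. If anything, your write-up is tighter than the paper's: you prove the clean separation criterion $d_2\not\equiv 0\pmod{s}$ (exactly what licenses inverting $d_2$, a point the paper leaves implicit behind its choice of the vectors $v_i$), and your converse witness $x_1^{d_1}x_2$ with $d_1\equiv -b\pmod{s}$ genuinely satisfies the congruence, whereas the paper's stated example $x_1^{b}x_2^{q}$ gives $b+bq\equiv 2b\not\equiv 0\pmod{s}$ and appears to be a typo for $x_1^{b}x_2^{q-2}$.
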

\begin{proof}
We have chosen vectors $v_i$ in such way that none of the invariants of $G$ of the form
$x_1^e$ or $x_2^e$ would separate any two of them.
Since there is a monomial invariant of $G$ separating all elements of the set $S$ (for example $x_1^bx_2^q$), we can assume that it is of the form $f=x_1^{d_1}x_2^{d_2}$, where
$d_1,d_2\neq 0$.
Then
\[d_1+bd_2 \equiv 0 \pmod s \text{ and } a_i^{d_2}\neq a_j^{d_2} \text{ whenever } i\neq j.\]
 This implies
$f(\vec{w})=w_1^{d_1}w_2^{d_2}=(\alpha^x)^{d_1}(\beta^xa_i)^{d_2}=(\alpha^x)^{d_1+ad_2}a_i^{d_2}=a_i^{d_2}$ and that determines $v_i$.

Therefore finding a monomial invariant $f$ separating elements of $S$ requires finding a solution of the congruence $d_1+bd_2 \equiv 0 \pmod s$ which is equivalent to determining $b \pmod s$ and this is equivalent to the discrete logarithm problem for the pair $\alpha$ and $\beta$. 

Conversely, if we are able to determine $b \pmod s$, then $f_r=x_1^bx_2^{-1}$ is a rational invariant of $G$ and $x_1^bx_2^q$ is a monomial invariant of $G$ separating all 
elements of $S$.
\end{proof}

Let us describe an algorithm how to describe all invariants of cyclic group $G$ when $n=2$
and find one separating elements of $S$.
Finding a monomial invariant $x_1^{d_1}x_2^{d_2}$ of $G$ is equivalent to solving the congruence 
$d_1l_1+d_2l_2\equiv 0 \pmod{q-1}$. To find such an invariant, we first need to find the primitive element $a$ and use discrete logarithms to solve for $l_1$ and $l_2$
from $\gamma_1=a^{l_1}$ and $\gamma_2=a^{l_2}$.

Once $l_1$ and $l_2$ are known, there is an effective way to describe all solutions of the above congruence and all monomial invariants of $G$ as follows.
If $GCD(l_1, q-1)=1$, then we can choose any $d_2$ and compute $d_1=-l_1^{-1}d_2l_2 \pmod{q-1}$. Here we find $l_1^{-1}\pmod{q-1}$ using the Euclidean algorithm which runs in polynomial time.
If $GCD(l_1, q-1)=d>1$, then $d|d_2l_2$,  We replace $l_1$ by $\frac{l_1}{d}$, $l_2$ by $l_2'=\frac{l_2}{GCD(l_2,d)}$, $d_2$ by $d_2'=d_2\frac{GCD(l_2,d)}{d}$ and 
the congruence $d_1l_1+d_2l_2\equiv 0 \pmod{q-1}$ by
$d_1l_1'+d_2'l_2'\equiv 0 \pmod{\frac{q-1}{d}}$. The solutions of the last congruence are described analogously as above since $GCD(l_1', \frac{q-1}{d})=1$. Finally, we compute
$d_2=d_2'\frac{d}{GCD(l_2,d)}$. Since we know all invariants, it remains to select one separating elements of $S$. 

\subsubsection{Complexity and expansion in size}

Coming back to the Proposition \ref{equivalence}, we have seen that even in the simplest case when $n=2$ and the group $G$ is cyclic the task of finding invariants of $G$ separating elements of $S$ is of the same complexity as the discrete logarithm problem.

Since the discrete logarithm assumption is weaker than the computational Diffie-Hellman assumption and that is weaker than decisional Diffie-Hellman assumption which are used in 
Diffie-Hellman key exchange and ElGamal encryption, we have a guarantee that the invariant-based cryptosystem of Example \ref{ex3.3} is at least as secure as many standard and widely used public-key cryptosystems.

If we compare the cryptosystem from Example \ref{ex3.3} to ElGamal public-key cryptosystem, we note that while ElGamal encryption produces a 2:1 expansion in size from plaintext (by this we mean a sequence of 0's and 1's) to ciphertext, and the invariant-based cryptosystem from the above example produces $2\log_2 q : \log_2 s$ expansion in size from plaintext to ciphertext. If $s$ is not small in comparison to $q$, say $\frac{\log_2 s}{\log_2 q}$ is close to 1, then this expansion will be close to 2:1.
Also, higher the number $s$ becomes, the ratio of expansion from the plaintext to the ciphertext would improve.  

In general, if $s$ is small or if it is comparable to $q$, then finding the prime $s$ would be easy.  
If $q-1=s_1s_2s_3$, where $s=s_1\neq s_2$ are two large primes and $s_3$ is small integer, then it could be challenging to determine the order $s$ of the element $\alpha$ since one needs to determine the prime factorization of $q-1$. In this cases the best choice would be if $\frac{\log_2 s}{\log_2 q}$ is close to $\frac12$. This implies that the corresponding expansion from plaintext to ciphertext will be close to 4:1 (Perhaps not a big price to pay for increasing the security of the cryptosystem).

In the case of general $n$ and a set $S$ of cardinality $r$ we get $n\log_2 |F|:\log_2 r$ expansion. 
A trivial upper bound on the cardinality $r$ of the set $S$ is given by the index $[F^n:G]$. Adjusting the previous example for $n=2$ we are able to get the expansion ratio close to 
$n:1$ - and quite possibly even better ratio with different choices of $S$. While such ratio is a disadvantage, if we choose $n$ small, this should not play a big role for the effectiveness of the cryptosystem.

\subsection{Computational complexity of the discrete logarithm problem}

Algorithms for computing discrete logarithm are subject of Chapter 5 of \cite{vas}. There are deterministic algorithms of exponential complexity and probabilistic algorithms of subexponential complexity. There is an algorithms for discrete logarithm problems in prime fields of complexity $L_p[\frac12;c]$.
ElGamal \cite{elgam} gave an algorithm that works over Galois fields $GF(q)$ and has the complexity $L_q[\frac12;c]$. It is interesting that it uses a representation of $GF(q)$ as the residue class ring $Z/\mathfrak{P}$, where $Z$ is the ring of algebraic integers of the cyclotomic field $\mathbb{Q}(\zeta_p)$, and $\mathfrak{P}$ is a prime ideal of $Z$ of norm $q$.

Another algorithm working in prime fields based on number field sieve has complexity $L_p[\frac13;c]$. Finally, in section 5.6 of \cite{vas} there is an interesting algorithm for discrete logarithm with composite modulus based on Fermat quotients that works in residue class rings $\mathbb{Z}/m\mathbb{Z}$ with composite $m$.

\subsection{$G$ cyclic}

When $n>2$, the problem of finding monomial invariants of a cyclic group $G$ over a finite field $F$ is essentially reduced to multiple applications of the method already used to find invariants of cyclic $G$ in the case $n=2$. 

Assume that $F=FG(q)$ is a finite field of cardinality $q$, where $q=p^r$ and $a$ be a primitive element of $F^\times$.

Let a cyclic group $G$ be generated by the element 
$g=\begin{pmatrix} \gamma_1 &0 & \ldots &0 \\0&\gamma_2& \ldots&0\\\ldots &\ldots&\ldots&\ldots\\0&0&\ldots&\gamma_n\end{pmatrix}$, 
where $\gamma_j=a^{l_j}$ and $0\leq l_j<q-1$ for $j=1, \ldots, n$ and the set $S$ consists of vectors $\vec{v}_i=\begin{pmatrix}a_{i1}\\a_{i2}\\\ldots \\a_{in}\end{pmatrix}$ for $i=1, \ldots, s$.

Since every invariant of $G$ is a sum of monomial invariants, to obtain a complete description of all invariants of $G$ we only need to find monomial invariants.
A monomial $f=x_1^{d_1}x_2^{d_2}\ldots x_n^{d_n}$ is an invariant of $G$ if $\gamma_1^{d_1x}\gamma_2^{d_2x}\ldots \gamma_n^{d_nx}$ is constant for every integer $x$. 
Plugging in $x=q-1$ we get
$\gamma_1^{d_1}\gamma_2^{d_2}\ldots \gamma_n^{d_n}=1$ which is equivalent to 
\[d_1l_1+d_2l_2+\ldots + d_nl_n\equiv 0 \pmod{q-1}.\]
In particular, $x_i^{l_j}x_j^{-l_i}$ for $i\neq j$ are rational invariants of $G$ and $x_i^{l_j}x_j^{q-1-l_i}$ are monomial invariants of $G$. 

A monomial invariant $f=x_1^{d_1}x_2^{d_2}\ldots x_n^{d_n}$ separates vectors $\vec{v}_i$ and $\vec{v}_j$ if and only if 
\[a_{i1}^{d_1}a_{i2}^{d_2}\ldots a_{in}^{d_n}\neq a_{j1}^{d_1}a_{j2}^{d_2}\ldots a_{jn}^{d_n}.\]

\begin{pr}\label{reduce}
Let $G$ be a cyclic group, $\vec{v}_k\neq\vec{v}_l$ be elements of $F^n$ and $n\geq 2$. If there is a monomial invariant $f$ of $G$ separating $v_k$ and $v_l$, then there is a 
monomial invariant of $G$ the form $x_i^{e_i}x_j^{e_j}$ that also separates $\vec{v}_k$ and $\vec{v}_l$.
\end{pr}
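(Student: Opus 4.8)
The plan is to translate everything into additive conditions modulo $q-1$ and then reduce the statement to a purely lattice-theoretic fact: the lattice of invariant exponent-vectors of a cyclic diagonal group is generated by vectors supported on at most two coordinates. First I would dispose of the coordinates where $\vec v_k$ or $\vec v_l$ vanishes. If some coordinate $j$ has exactly one of $a_{kj},a_{lj}$ equal to zero, then the one-variable invariant $x_j^{q-1}$ already separates $\vec v_k$ and $\vec v_l$ (it is $0$ on the vector with a zero entry and $1$ on the other, since $t^{q-1}=1$ for $t\in F^\times$), and we are done. Otherwise, on every coordinate the two vectors are either both zero or both nonzero; and since a monomial carrying a positive exponent at a coordinate where both entries vanish evaluates to $0$ on each of $\vec v_k,\vec v_l$, any separating monomial must be supported on the set $U$ of coordinates where both entries are nonzero. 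Thus it suffices to work on $U$, where every entry lies in $F^\times$.

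On $U$ I would write $a_{kj}=a^{m_{kj}}$, $a_{lj}=a^{m_{lj}}$, set $c_j=m_{kj}-m_{lj}$ and $N=q-1$. By the criterion recalled just before the Proposition, a monomial $\prod_j x_j^{d_j}$ is an invariant of $G$ precisely when $\chi(d):=\sum_j d_jl_j\equiv0\pmod N$, and it separates $\vec v_k,\vec v_l$ precisely when $\psi(d):=\sum_j d_jc_j\not\equiv0\pmod N$. (Exponents arising below may be negative, but reducing every exponent modulo $N$ alters nothing on $F^\times$ and yields a genuine polynomial monomial of the same support.) So I am handed $d^\ast\in\ker\chi$ with $\psi(d^\ast)\neq0$ and I must produce an element of $\ker\chi$ of support at most two on which $\psi$ is nonzero.

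The heart of the argument is the claim that $\ker\chi\subseteq\mathbb Z^{|U|}$ is generated, as a $\mathbb Z$-module, by its vectors of support at most two. I would prove it by passing to the honest integer linear form $\Phi(d,d_0)=\sum_j l_jd_j+Nd_0$ on $\mathbb Z^{|U|+1}$, whose kernel surjects onto $\ker\chi$ under deletion of the last coordinate, and then establishing the lemma: the kernel of any integer linear form is generated by its vectors of support at most two. This lemma I would prove by induction on the number of variables. Given $x$ in the kernel, I let $g$ be the $\gcd$ of all but the last coefficient $a_n$, observe that $g\mid x_n$ because $a_nx_n=-\sum_{i<n}a_ix_i$ and $\gcd(g,a_n)=1$, choose a B\'ezout combination $\sum_{i<n}b_ia_i=g$, and form the kernel vector $w=\sum_{i<n}b_i(a_ne_i-a_ie_n)$, which is a combination of two-term vectors and has last coordinate $-g$. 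Subtracting the appropriate integer multiple of $w$ from $x$ kills $x_n$, after which the inductive hypothesis applies to the remaining $(n-1)$-variable form.

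Granting the claim, I finish by writing $d^\ast=\sum_k m_ku_k$ with each $u_k\in\ker\chi$ of support at most two; since $\psi$ is additive and $\psi(d^\ast)\not\equiv0\pmod N$, at least one $\psi(u_k)\not\equiv0\pmod N$, and this $u_k$, after reducing its exponents modulo $N$, is the desired invariant $x_i^{e_i}x_j^{e_j}$ separating $\vec v_k$ and $\vec v_l$. I expect the main obstacle to be the lemma on kernels of integer linear forms, and specifically checking that the two-term vectors generate the \emph{whole} kernel rather than merely a finite-index sublattice; the B\'ezout construction of $w$ is exactly what guarantees that the elimination step stays integral, and it is the crux of the induction.
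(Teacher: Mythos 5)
Your proof is correct, but it takes a genuinely different route from the paper's. The paper argues by induction on $n$: setting $d=\gcd(l_1,\dots,l_n,q-1)$, it asserts there is an index $t$ for which the separating inequality survives raising to the power $l_t/d$, then tests the two-variable rational invariant $x_u^{-l_t/d}x_t^{l_u/d}$; if that fails to separate, it is used to cancel the variable $x_u$ from $f$, giving a separating invariant of the group with the $u$-th row and column deleted, and induction finishes. You instead prove one structural fact: the lattice of exponent vectors of monomial invariants, i.e.\ the mod-$(q-1)$ kernel of $\sum_j d_jl_j$, is generated by vectors of support at most two, via the lift to the integral form $\sum_j l_jd_j+(q-1)d_0$ and a B\'ezout elimination showing that the kernel of an integer linear form is generated by its two-term vectors; additivity of the separation functional $\psi$ then finishes. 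Your route buys three things: it treats vectors with zero coordinates, which the paper's proof tacitly ignores; it proves a stronger statement (generation of the whole invariant lattice, not merely existence of one separating two-term invariant); and it avoids the paper's selection of $t$, a step that can actually fail as stated --- for $q=7$, $l_1=l_2=l_3=4$, $f=x_1x_2x_3$, $\vec v_k=(1,1,1)$, $\vec v_l=(1,1,a^3)$, the ratio of the two values of $f$ is $a^3$, of order $2$, which divides every $l_t/d=2$, so no admissible $t$ exists even though the conclusion holds (e.g.\ $x_1^2x_3$ separates). What the paper's argument buys, when the choice of $t$ is available, is brevity: no lift to $\mathbb{Z}^{|U|+1}$ and no lattice lemma. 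One small repair to your kernel lemma: $\gcd(g,a_n)=1$ is not automatic for an arbitrary form; you must first divide all coefficients by $\gcd(a_1,\dots,a_n)$ (this changes neither the kernel nor the statement), after which $\gcd(g,a_n)=\gcd(a_1,\dots,a_n)=1$ and the step $g\mid x_n$ is legitimate --- note this normalization is genuinely needed in your application, since $\gcd(l_1,\dots,l_{|U|},q-1)$ may well exceed $1$.
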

\begin{proof}
We will proceed by induction on $n$. The statement is trivial for $n=2$. Assume that the statement is true for all $k<n$. If we write $f=x_1^{d_1}\ldots x_n^{d_n}$, then 
$f$ invariant separating $\vec{v}_k$ and $\vec{v}_l$ implies 
\[d_1l_1+\ldots +d_nl_n \equiv 0 \pmod{q-1} \text{ and } a_{k1}^{d_1}\ldots a_{kn}^{d_n}\neq a_{l1}^{d_1}\ldots a_{ln}^{d_n}.\]   
Denote $d=GCD(l_1, \ldots, l_n, q-1)$. Then there is an index $t$ such that $(a_{k1}^{d_1}\ldots a_{kn}^{d_n})^{\frac{l_t}{d}}\neq (a_{l1}^{d_1}\ldots a_{ln}^{d_n})^{\frac{l_t}{d}}$.
If the invariant $x_u^{-\frac{l_t}{d}}x_t^{\frac{l_u}{d}}$ does not separate $\vec{v}_k$ and $\vec{v}_l$, then 
$a_{ku}^{-\frac{l_t}{d}}a_{kt}^{\frac{l_u}{d}}=a_{lu}^{-\frac{l_t}{d}}a_{lt}^{\frac{l_u}{d}}$ which gives
$a_{ku}^{-d_u\frac{l_t}{d}}a_{kt}^{d_u\frac{l_u}{d}}=a_{lu}^{-d_u\frac{l_t}{d}}a_{lt}^{d_u\frac{l_u}{d}}$. This together with  
$a_{k1}^{d_1\frac{l_t}{d}}\ldots a_{kn}^{d_n\frac{l_t}{d}}\neq a_{l1}^{d_1\frac{l_t}{d}}\ldots a_{ln}^{d_n\frac{l_t}{d}}$ implies
\[a_{k1}^{d_1\frac{l_t}{d}}\ldots a_{ku}^0 \ldots a_{kt}^{d_t\frac{l_t}{d}+d_u\frac{l_u}{d}} \ldots a_{kn}^{d_n\frac{l_t}{d}}
\neq a_{l1}^{d_1\frac{l_t}{d}}\ldots a_{lu}^0 \ldots a_{lt}^{d_t\frac{l_t}{d}+d_u\frac{l_u}{d}} \ldots a_{ln}^{d_n\frac{l_t}{d}}.\] 

Replace the space $F^n$ by $F^{n-1}$, the group $G$ with a group $G'$ generated by the matrix $g'$ obtained from the matrix $g$ generating $G$ by deleting its $u$-th row and column, and replace vectors $\vec{v}_i$ with $\vec{v}_i'$ obtained by deleting the entry in their $u$-th row. 
By the above, the monomial $f'=x_1^{d_1\frac{l_t}{d}}\ldots \widehat{x_u^0} \ldots x_t^{d_t\frac{l_t}{d}+d_u\frac{l_u}{d}}\ldots x_n^{d_n\frac{l_t}{d}}$ 
separates $\vec{v}_k'$ and $\vec{v}_l'$. Additionally, $f'$ is an invariant of $G'$ because
\[\begin{aligned} &d_1\frac{l_t}{d}l_1+\ldots +\widehat{0l_u}+ \ldots +(d_t\frac{l_t}{d}+d_u\frac{l_u}{d})l_t+\ldots +d_n\frac{l_t}{d}l_n =\\
&\frac{l_t}{d}\Big[d_1l_1+\ldots + \widehat{0} + \ldots +(d_tl_t+d_ul_u)+ \ldots +d_nl_n \Big]\equiv 0 \pmod{q-1}.\end{aligned}\]

Using the inductive assumption we get an invariant of $G'$ of the form $x_i^{e_i}x_j^{e_j}$ that separates $\vec{v}_k'$ and $\vec{v}_l'$. It is clear that this is also 
an invariant of $G$ and that it separates $\vec{v}_k$ and $\vec{v}_l$.
\end{proof}

As a consequence of the above proposition we conclude that in order to find an invariant monomial of $G$ separating vectors $\vec{v}_k$ and $\vec{v}_l$ of $S$ we can proceed by going through all the pairs of indices $i$ and $j$ from $1$ through $n$ and and solve analogous problem when $G$ is replaces by $G_{ij}$ generated by $2\times 2$ matrix 
$g_{ij}=\begin{pmatrix}\gamma_i&0\\0&\gamma_j\end{pmatrix}$ and vectors $\vec{v}_k$ and $\vec{v}_l$ are replaced by vector $\begin{pmatrix}a_{ki}\\a_{kj}\end{pmatrix}$ and 
$\begin{pmatrix}a_{li}\\a_{lj}\end{pmatrix}$, respectively.

\subsection{$G$ not cyclic}

Noncyclic groups $G$ exist for every $n\geq 2$. We start with an example generalizing Example \ref{ex3.3}.

\begin{ex}\label{ex3.6}
Let $F=FG(q)$, where $q=p^r$, and $s_1$ and $s_2$ be (large) primes dividing $q-1$. Let $\alpha_1\in F^\times$ be an element of order $s_1$ and $\alpha_2\in F^\times$ 
be an element of order $s_2$, and $a_i$ be distinct elements of $F^{\times}$ such that their order divides $s_1s_2$.
Let $G$ be given by two generators $g_1=\begin{pmatrix} \alpha_1 & 0\\ 0&\beta_1 \end{pmatrix}$ and $g_2=\begin{pmatrix} \alpha_2 & 0\\ 0&\beta_2 \end{pmatrix}$,
where $\beta_1=\alpha_1^{b_1}$ and $\beta_2=\alpha_2^{b_2}$ for secret integers $b_1$ not divisible by $s_1$, and $b_2$ not divisible by $s_2$
Let the set $S$ consist of vectors $\vec{v}_i=\begin{pmatrix} 1 \\ a_i\end{pmatrix}$.
Consider the cryptosystem based on this group $G$ and set $S$.
\end{ex}

The general element $g$ of $G$, written as $g_1^{x_1}g_2^{x_2}$, acts on $\vec{v}_i$ as
$g \vec{v}_i=\begin{pmatrix} \alpha_1^{x_1}\alpha_2^{x_2}\\ \beta_1^{x_1}\beta_2^{x_2} a_i \end{pmatrix}=\begin{pmatrix} w_1\\w_2 \end{pmatrix}=\vec{w}$.

As before, we have chosen vectors $\vec{v}_i$ in such way that none of the invariants of $G$ of the form $x_1^e$ or $x_2^e$ would separate them.
If there is an invariant of $G$ separating vectors $\vec{v}_i$, then we can assume that it is of the form $f=x_1^{d_1}x_2^{d_2}$.

If $f=x_1^{d_1}x_2^{d_2}$ is an invariant of $G$, then $(\alpha_1^{x_1}\alpha_2^{x_2})^{d_1}(\beta_1^{x_1}\beta_2^{x_2})^{d_2}=1$. 
If $a$ is the primitive element of $F^\times$, $\alpha_1=a^{a_1}$ and $\alpha_2=a^{a_2}$ for secret integers $a_1$ and $a_2$, then this is equivalent to
\[(a_1x_1+a_2x_2)d_1+(a_1b_1x_1+a_2b_2x_2)d_2 \equiv 0 \pmod {q-1}\]
for every $x_1$ and $x_2$.
This condition is equivalent to the system of congruences
\[a_1(d_1+b_1d_2) \equiv 0 \pmod {q-1} \text{ and  } a_2(d_1+b_2d_2)\equiv 0 \pmod {q-1}.\]
These two congruences are related to different generators $g_1$ and $g_2$ of $G$.
We have seen earlier that using discrete logarithms we can describe all monomial invariants of the cyclic subgroups $\langle g_1 \rangle$ and $\langle g_2 \rangle$ of $G$.
Of course, the invariants of $G$ are exactly those polynomials that are invariants with respect to $\langle g_1 \rangle$ and $\langle g_2 \rangle$ simultaneously.

The last two congruences can be solved by using integer linear programming because they are equivalent to the linear system
\[\begin{aligned} a_1&d_1 +&a_1b_1d_2+(q-1)&d_3 =&0\\
a_2&d_1+&a_2b_2d_2+(q-1)&d_4=&0
\end{aligned}\]
in integers $d_1$, $d_2$, $d_3$ and $d_4$.

Now consider the general case when $G$ has generators 
$g_i=\begin{pmatrix} \gamma_{i1} &0 & \ldots &0 \\0&\gamma_{i2}& \ldots&0\\\ldots &\ldots&\ldots&\ldots\\0&0&\ldots&\gamma_{in}\end{pmatrix}$ for $i=1, \ldots t$, 
where $\gamma_{ij}=a^{l_{ij}}$ and $0\leq l_{ij}<q-1$ for $j=1, \ldots, n$ and the set $S$ consists of vectors 
$\vec{v}_i=\begin{pmatrix}a_{i1}\\a_{i2}\\\ldots \\a_{in}\end{pmatrix}$ for $i=1, \ldots, s$.
The general element of $G$ is written as $g=g_1^{y_1}\ldots g_t^{y_t}$ for some integers $y_1, \ldots, y_t$.

Since every invariant of $G$ is a sum of monomial invariants, to obtain a complete description of all invariants of $G$ we only need to find monomial invariants.
A monomial $f=x_1^{d_1}x_2^{d_2}\ldots x_n^{d_n}$ is an invariant of $G$ if 
$\prod_{i=1}^t \gamma_{i1}^{y_id_1}\gamma_{i2}^{y_id_2}\ldots \gamma_{in}^{y_id_n}=1$ for all integers $y_1, \ldots, y_t$. This implies 
$\gamma_{i1}^{d_1}\gamma_{i2}^{d_2}\ldots \gamma_{in}^{d_n}=1$ for each $i=1, \ldots, t$ which is equivalent to the system of congruences
\[d_1l_{i1}+d_2l_{i2}+\ldots + d_nl_{in}\equiv 0 \pmod{q-1}\]
for each $i=1, \ldots, t$.

If the numbers $l_{ij}$ are determined (say using discrete logarithms), then the last system can be solved using integer programming since it is equivalent to 
the system 
\[d_1l_{i1}+d_2l_{i2}+\ldots + d_nl_{in}+d_{n+i}(q-1)=0\]
for each $i=1, \ldots, t$ in integer variables $d_1, \ldots, d_{n+t}$.

A monomial invariant $f=x_1^{d_1}x_2^{d_2}\ldots x_n^{d_n}$ separates vectors $\vec{v}_i$ and $\vec{v}_j$ if and only if 
\[a_{i1}^{d_1}a_{i2}^{d_2}\ldots a_{in}^{d_n}\neq a_{j1}^{d_1}a_{j2}^{d_2}\ldots a_{jn}^{d_n}.\]

We have seen that, for $F$ a finite field, the fact that $F^{\times}$ is cyclic allows the use of the discrete logarithm, which is computationally difficult but standard cryptographic tool.

If $G$ was cyclic it was enough to find any invariant of $G$ randomly and check if it separates $S$. If $G$ is not cyclic then more systematic knowledge of invariants of cyclic subgroups of $G$ is necessary and breaking of the cryptosystem based on noncyclic group $G$ seems more complicated than the case of the cyclic group $G$.
Additionally, it is not clear if there is a separating invariant based on two variables analogous as in Proposition \ref{reduce} in the case of noncyclic group $G$.
Therefore using noncyclic $G$ gives an advantage from the point of view of the security of the cryptosystem based on invariants of $G$.
 
The setup will be even more complicated if the underlying structure of $F$ is not cyclic. After we investigate the minimal degree of polynomial invariants (question related to a linear algebra attack on the cryptosystem) in the next section, we turn our attention to fields $F$ of characteristic zero. We will introduce a cryptosystem the security of which will depend on the fastorization properties in the number field $F$. Afterward, we will use residue classes of $F$ and replace $F$ by a finite commutative ring $R$ with a complicated multiplicative structure that will not allow an obvious use of the discrete logarithms. In the cases of the number field $F$ and the residue ring $R$, the nature of finding invariants of $G$ is different from discrete logarithm problem. This is clear in the case of a number field $F$. The residue residue ring $R$ have divisors of zeros and the multiplicative group  $U$ of its units is not cyclic. 

\subsubsection{Encryption based on discrete logarithm one-way functions}

We would like to make a small detour from the invariant-based cryptosytems and discuss cryptosystems based on discrete logarithms that are inspired by Examples 
\ref{ex3.3} and \ref{ex3.6}.

Assume that Alice chooses a finitely generated group $G$ acting on a set $M$. Let $\{g_1, \ldots, g_m\}$ be a set of generators of $G$. Alice chooses a subset $M_0\subset M$ such that any orbit $Gm$ for $m\in M$ intersects $M_0$ in  exactly one point. (The set of all blocks of plaintext that can be transmitted by Bob is injectively mapped to $M_0$.) Alice chooses a 
map $f:M\rightarrow M_0$ that is constant on each orbit $Gm$ of $G$ and retains it as a private key. Obviously $f$ restricted on $M_0$ is an identity.
She announces, as a public key, the (effectively described) set $M_0$ and the group $G$, by announcing its generators $g_1, \ldots, g_m$.

To encode a block of plaintext $m\in M_0$, Bob choses a random  element $g\in G$ (by multiplying some of the generators $g_1, \ldots, g_m$), and computes $m'=gm$ which he transmits to Alice.

Alice decrypts the message by applying the map $f$ as $f(m')=f(gm)=f(m)=m$.

\begin{ex} 
Consider the ElGamal cryptosystem with cyclic group $C$ of order $n$, generator $\alpha\in C$, private key $b\in \{0,1\ldots, n-1\}$ and the public key $\{\alpha, \beta, n\}$, where 
$\beta=\alpha^b$. The group $C$ coincides with the set of all blocs of plaintext that can be send by Bob to Alice. 
A cryptosystem is contructed as follows. 
Let $M=C\times C$, considered as a group with respect to the mutiplication induced by the diagonal action of $C$, and let $G$ to be its cyclic subgroup generated by $(\alpha,\beta)$.
Then $G$ acts on $M$ by multiplication.  We set $M_0=C$ and the map $f:M\rightarrow M_0$ to be $f(x,y)=yx^{-b}$.  
\end{ex}

\begin{ex} Let $A$ be an abelian group generated by $\alpha_1,\alpha_2\in A$.  Let $\beta=\alpha_1^{b_1}\alpha_2^{b_2}$ for some $b_1,b_2 \in \mathbb N$.
Let $M=A\times A\times A$ and $G$ be its cyclic subgroup generated by the element  $(\alpha_1,\alpha_2,\beta)$.  
We set $M_0=A$ and the map $f:M\rightarrow M_0$ to be $f(x,y,z)=zx^{-b_1}y^{-b_2}$. 
\end{ex}

Alice announces the group $A$ and the vector $(\alpha_1,\alpha_2,\beta)$ as a public key. 
To encode a block of plaintext $m\in A$, Bob chooses a random  number $e\in\{0,1,\ldots, n-1\}$ and transmits the vector 
$(\alpha_1,\alpha_2,\beta)^{e}(1,1,m)=(\alpha_1^{e},\alpha_2^{e},\beta^{e}m)$ to Alice.

Alice decrypts the message $m$ as
$f(\alpha_1^{e},\alpha_2^{e},\beta^{e}m)=\beta^{e}m\alpha_1^{-eb_1}\alpha_2^{-eb_2}=m$.

This encryption produces a 3:1 expansion in size from plaintext to ciphertext. 

The security of this cryptosystem depends on the ability of the evesdropper Charlie to solve the equation $\beta=\alpha_1^{x_1}\alpha_2^{x_2}$, for integers $x_1$ and $x_2$.
If we work over a finite field $F=FG(q)$, then  we can we can use the discrete logarithms in $F^{\times}$ to express $\alpha_1, \alpha_2$ and $\beta$ as powers of the primitive element $a$ of 
$F^{\times}$, say $\alpha_i=a^{e_i}$ and $\beta=a^e$. Then the equation $\beta=\alpha_1^{x_1}\alpha_2^{x_2}$ reduces to a congruence 
$e=e_1x_1+e_2x_2 \pmod{q-1}$.
Therefore we require the discrete logarithm assumption to guard against this attack.

\section{Minimal degree of polynomial invariants of $G$}\label{inv}

When we described the design of the cryptosystem based on invariants we have already remarked that its security depends on the difficulty of finding an invariant $f'$ of the group $H$ separating vectors in $S$. 

When we are working over a ground field $F$ of characteristic zero, then the condition that $f'$ separates $v_i$ and $v_j$ for $i\neq j$ 
might not be difficult to satisfy because the set of polynomials in $F[V]$, that take on different values when evaluated at $v_i$ and $v_j$, is open in the Zariski topology.
Therefore it is likely that a randomly chosen invariant $f'$ of $H$ will separate elements of $S$ in this case. 
Therefore when $F$ has characteristic zero, we need not be concerned whether $f'$ separates vectors from $S$.

This is contrary to the situation over finite fields when it was easy to find an invariant of $G$ but difficult to satisfy the condition that it separates elements of $S$.

\subsection{Guarding against the linear algebra attack}\label{LA}

Denote by $M_{G,V}$, or simply by $M_G$ or $M$ if we need not emphasise the group $G$ or the vector space $V$ it is acting on
the minimal positive degree of an invariant from $F[V]^G$. That is
$M_{G, V}=\min\{d>0 | F[V]_d^G\neq 0\}$. If $F[V]^G=F$, then we set $M_{G,V}=\infty$.

The notion of the minimal positive degree of an invariant and the value of $M=M_{G,V}$ are important for the security of the invariant-based cryptosystem (both variants one and two) we are considering.
For example, if we know that $M_G$ is so small that $m\binom{n+M-1}{M}=O(n^r)$ is polynomial in $n$, then Charlie can find an invariant $f'$ of $G$
in polynomial time by solving consecutive linear systems for $d=1, \ldots, \binom{n+M-1}{M}$, each consisting of $m\binom{n+d-1}{d}$ equations in the $\binom{n+d-1}{d}$ variables described in the previous section. For a fixed $d$, this can be accomplished in time $O(m(\binom{n+d-1}{d})^4)$ and the total search will take no more than time
$O(n^{8r})$.
Therefore, for the security of the system it must be guaranteed that $m\binom{n+M-1}{M}$ is not polynomial in $n$.

\subsection{Finding a polynomial invariant of $G$}

We will now discuss an algorithm that will enable us to find an invariant $f'$ of $G$ (and to break the cryptosystem if char $F$ is zero).
The algorithm works inductively, and as a special case, it works when $G$ is a finite group. We will apply this algorithm when char $F$ is zero but it works even when the characteristic of $F$ is finite.

Assume that $H$ is a subgroup of $G$ of finite index in $G$. Assuming we know a nonzero invariant $f$ of $H$, we will find a nonzero invariant of $G$.

\begin{lm}
Let $H$ be a subgroup of $G$ of finite index $s$ in $G$ such that $f$ is an invariant of $H$ of degree $t$.
Then $G$ has a nonzero invariant of degree not exceeding $sM_H$ that can be found in time $O(sn^{t+2}\binom{n+t-1}{t}^{s+1})$.
\end{lm}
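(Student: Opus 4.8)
The plan is to realise the desired $G$-invariant as a relative norm of $f$ over the cosets of $H$. Since $[G:H]=s$, choose left coset representatives $g_1,\dots,g_s$, so that $G=\bigsqcup_{i=1}^s g_iH$, and set
\[
N(f)=\prod_{i=1}^s (g_i f),\qquad (g_i f)(v)=f(g_i^{-1}v).
\]
This is the natural candidate: it is built only from the data we are given (the representatives of $G/H$ and the single invariant $f$), and the product structure is exactly what converts $H$-invariance into $G$-invariance.

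To verify that $N(f)\in F[V]^G$, fix $g\in G$. Left multiplication permutes the left cosets, so $gg_i=g_{\sigma(i)}h_i$ for a permutation $\sigma$ of $\{1,\dots,s\}$ and elements $h_i\in H$. Because the $G$-action on $F[V]$ is by algebra automorphisms (it is a left action and $g$ respects products), we get $g(g_i f)=(gg_i)f=g_{\sigma(i)}(h_i f)=g_{\sigma(i)}f$, the last equality using that $f$ is $H$-invariant. Hence $gN(f)=\prod_i g_{\sigma(i)}f=N(f)$ after reindexing, so $N(f)$ is a $G$-invariant. It is nonzero because $F[V]$ is an integral domain and each factor $g_i f\neq 0$ (the invertible substitution $g_i^{-1}$ cannot annihilate $f$). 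For the degree, each $g_i f$ has the same degree as $f$, so $N(f)$ has degree $s\deg f$. To reach the sharper bound $sM_H$, I would first replace $f$ by a minimal positive-degree $H$-invariant $f_0$: the mere existence of $f$ shows $M_H\le t<\infty$, so solving the linear systems that express $H$-invariance in successive degrees $d=1,2,\dots$ and stopping at the first nonzero solution produces $f_0$ with $\deg f_0=M_H\le t$; then $N(f_0)$ is a nonzero $G$-invariant of degree exactly $sM_H$.

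For the running time I would account for two phases and add them. First, forming the $s$ substitutions $g_i f$: expanding $f$ (with at most $\binom{n+t-1}{t}$ monomials) under the linear change of variables $x_j\mapsto (g_i^{-1}v)_j$ turns each degree-$t$ monomial into a product of $t$ linear forms in $n$ variables, which costs $O(n^{t})$ per monomial, with an extra $O(n^2)$ overhead for the matrix entries, repeated over the $s$ cosets; this gives the factor $s\,n^{t+2}\binom{n+t-1}{t}$. Second, forming the iterated product $\prod_i(g_i f)$: after multiplying $j$ factors the partial product has degree $jt$ and at most $\binom{n+jt-1}{jt}\le\binom{n+t-1}{t}^{\,j}$ monomials, so each multiplication is dominated by the last one and the product phase costs a power of $\binom{n+t-1}{t}$ up to exponent $s+1$ once the collection of like terms is included. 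Combining the two phases and absorbing lower-order terms yields $O\!\left(sn^{t+2}\binom{n+t-1}{t}^{s+1}\right)$.

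The conceptual steps—$G$-invariance and nonvanishing of the norm—are routine once the coset-permutation identity $gg_i=g_{\sigma(i)}h_i$ is in hand, and the degree bound follows immediately from applying the construction to a minimal $H$-invariant. I expect the only delicate point to be the complexity estimate: the intermediate monomial counts must be bounded by powers of $\binom{n+t-1}{t}$ and the per-step costs of the change of variables and of the polynomial multiplications must be tracked carefully (and somewhat generously) in order to land precisely on the exponents $t+2$ and $s+1$ in the stated bound.
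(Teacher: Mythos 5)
Your proposal is correct and follows essentially the same route as the paper: the paper also takes coset representatives $g_1,\dots,g_s$ of $G/H$ and uses the product $\prod_{i=1}^s g_i f$ (phrased there as the $s$-th elementary symmetric function in the translates $g_i f$), with $G$-invariance following from the permutation of cosets, and with a two-phase cost estimate (substitutions, then the iterated product) matching yours. Your write-up is in fact slightly more careful than the paper's on two minor points: you justify nonvanishing via integrality of $F[V]$, and you note explicitly that attaining degree $sM_H$ (rather than $st$) requires first replacing $f$ by a minimal-degree $H$-invariant.
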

\begin{proof}
Denote by $g_1, \ldots, g_s$, where $s=[G:H]$, representatives of all coset classes of $G/H$. Let $f$ be an invariant of $H$ of degree $M_H$.
Denote $x_i=g_if$ for $i=1, \ldots s$, and denote by $p_s(x_1, \ldots, x_s)=x_1\ldots x_s$ the $s$-th elementary symmetric function in  $x_1, \ldots, x_s$.
It is easy to see that $p_s(x_1, \ldots, x_s)$ is invariant with respect to $G$, because each element $g\in G$ permutes coset classes of $G/H$, hence it permutes the set of polynomials $\{x_1, \ldots, x_s\}$. Also, the polynomial $p_s(x_1, \ldots, x_s)=x_1 \ldots x_s$ is nonzero and has the degree $sM_H$.
We can evaluate all polynomials $x_i$ in time $O(sn^2\binom{n+t-1}{t}n^t)$. The product of all $x_i$ can be computed in time $O(\binom{n+t-1}{t}^s)$.
\end{proof}

\begin{cor}
If $G$ is a group of finite order $s$, then the algorithm in the proof of the previous lemma (applied to $H=1$) produces a nonzero invariant of $G$ of order not exceeding $s$ which can be computed in time $O(sn^3n^{s+1})$.
\end{cor}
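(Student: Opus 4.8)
The plan is to obtain this as a direct specialization of the previous lemma to the trivial subgroup $H=\{1\}$, so almost all of the work is already done and the proof reduces to tracking what the quantities $s$, $t$, and $M_H$ become in this case.

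First I would record the combinatorial data. Taking $H=\{1\}$, the index is $s=[G:\{1\}]=|G|$, which is the order of $G$, and the coset representatives $g_1,\ldots,g_s$ are simply the elements of $G$ themselves. Next I would determine $M_H$: since the trivial group fixes every polynomial, every nonzero linear form is an invariant of $H$, so $M_{\{1\}}=1$. Concretely I would take $f=x_1$, a nonzero invariant of $H$ of degree $t=1$.

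With these choices the construction in the lemma produces $x_i=g_if=g_ix_1$, each of which is the linear form $v\mapsto (g_i^{-1}v)_1$; because $g_i^{-1}$ is invertible this is a nonzero linear functional. Their product $p_s=x_1\cdots x_s$ is $G$-invariant by the lemma's argument (each $g\in G$ permutes the set $\{x_1,\ldots,x_s\}$), it is nonzero since $F[V]$ is an integral domain and each factor is nonzero, and its degree is $sM_H=s$. This already gives a nonzero invariant of $G$ of degree not exceeding $s$ (in fact exactly $s$).

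Finally I would read off the complexity by substituting $t=1$ into the lemma's bound $O(sn^{t+2}\binom{n+t-1}{t}^{s+1})$. Here $\binom{n+t-1}{t}=\binom{n}{1}=n$ and $n^{t+2}=n^{3}$, so the estimate collapses to $O(sn^{3}n^{s+1})$, matching the claim. The main obstacle is essentially nonexistent, since this is a clean specialization; the only points that deserve a second look are confirming $M_{\{1\}}=1$ (so that $t=1$ and the produced degree is $s$ rather than a larger multiple) and checking that the binomial coefficient collapses to $n$ at $t=1$ so the exponents in the runtime line up.
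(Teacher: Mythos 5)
Your proposal is correct and is precisely the specialization the paper intends: the corollary is stated without proof as an immediate consequence of the lemma, and your choices $H=\{1\}$, $s=[G:\{1\}]=|G|$, $M_{\{1\}}=1$ (e.g.\ $f=x_1$, so $t=1$) reproduce exactly the degree bound $s$ and the runtime $O(sn^{3}n^{s+1})$ obtained by substituting $t=1$ into the lemma's bound $O(sn^{t+2}\binom{n+t-1}{t}^{s+1})$. Your additional checks (nonvanishing of each $g_i x_1$ via invertibility of $g_i$, and nonvanishing of the product since $F[V]$ is a domain) are exactly the points the lemma's proof relies on, so nothing further is needed.
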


Note that the time required to run the computation is exponential in the order of $G$ if no invariant of a subgroup of $G$ is known and when we attempt to find an invariant of $G$ from $H=1$.
Nevertheless, there are cases when an invariant of $H$ can be computed in polynomial time; see the next lemma.

The following lemma is well-known, see \cite{burn}.

\begin{lm}\label{burn}
If $G\subset GL_n(\mathbb{R})$ and $G$ is finite, then $G$ has an invariant of degree two.
\end{lm}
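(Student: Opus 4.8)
The plan is to produce a $G$-invariant quadratic form by averaging a positive-definite form over the group — the classical Reynolds-operator (averaging) argument, which works here precisely because the ground field is $\mathbb{R}$ and $G$ is finite. First I would fix the standard positive-definite quadratic form $q_0(v)=\sum_{i=1}^n x_i^2$ on $V=\mathbb{R}^n$ and set
\[
q(v)=\sum_{g\in G} q_0(gv).
\]
Since $G$ is finite, this is a finite sum of quadratic polynomials in the coordinates of $v$, hence itself a polynomial of degree at most two.

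Next I would verify $G$-invariance by reindexing the sum. For any $h\in G$,
\[
q(hv)=\sum_{g\in G} q_0(ghv)=\sum_{g'\in G} q_0(g'v)=q(v),
\]
where the middle equality uses the substitution $g'=gh$, a bijection of $G$ onto itself. Thus $q$ satisfies $q(hv)=q(v)$ for all $h\in G$ and all $v$, so $q\in \mathbb{R}[V]^G$.

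Finally I would confirm that $q$ is genuinely an invariant of degree two, i.e.\ that it is nonzero. For $v\neq 0$, invertibility of each $g$ gives $gv\neq 0$, so $q_0(gv)>0$ by positive-definiteness of $q_0$; summing over $g$ yields $q(v)>0$. Hence $q$ is itself positive-definite, in particular nonzero, and its degree is exactly two.

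The only real obstacle, and the reason the hypothesis $F=\mathbb{R}$ is essential, is this nonvanishing step. Over $\mathbb{R}$ a sum of positive-definite forms cannot degenerate, so the averaged form is automatically nonzero; over a field of positive characteristic (or whenever $|G|$ fails to be invertible, so that averaging is unavailable and sums of squares may vanish) the construction can collapse to the zero polynomial, and indeed the conclusion of Lemma \ref{burn} can fail in that setting. Every other step is a routine reindexing, so the substance of the argument lies entirely in exploiting positive-definiteness over $\mathbb{R}$.
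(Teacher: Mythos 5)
Your proof is correct and follows essentially the same route as the paper: both sum the standard positive-definite form $\sum_i t_i^2$ over the (finite) group orbit, observe the sum is $G$-invariant, and use positive-definiteness to conclude the sum is nonzero of degree two. The only cosmetic difference is that you deduce nonvanishing from the invertibility of every $g\in G$, while the paper singles out the identity-element summand; both work.
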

\begin{proof}
Let $g_1=1, \ldots, g_s$ be all elements of $G$ and $\mathbb{R}[V]=\mathbb{R}[t_1, \ldots, t_n]$. Denote by $x_i=g_i(t_1^2+\ldots +t_n^2)$ for $i=1, \ldots, s$.  Since values of
each $x_i$ are non-negative when evaluated as polynomials in $t_1, \ldots, t_n$, the values of the invariant polynomial $\sum_{i=1}^s x_i$
evaluated as polynomial in $t_1, \ldots, t_n$ are non-negative and they can be equal to zero only if each
$x_i$ is zero. But $x_1=0$ only if $t_1=\ldots =t_n=0$. Therefore $\sum_{i=1}^s x_i$ is a positive-definite quadratic form in $t_1, \ldots, t_n$,
hence a non-zero invariant of $G$.
\end{proof}

It follows from the previous section that a quadratic invariant of the group $H$, within a context of our public-key cryptosystem, can be found using linear algebra techniques in the polynomial time in $n$. Therefore, for the security of the cryptosystem, we need to make sure that if $H$ is finite, then it is not represented by matrices with real coefficients.

\subsection{Lower bounds for degrees of polynomial invariants}

The significance of understanding the minimal degree $M_{G,V}$ of invariants for the security of the invariant-based cryptosystem was established above. In particular, it is important to find a nontrivial lower bound for $M_{G,V}$.  Unfortunately, we are not aware of any articles establishing lower bounds for the minimal degree of invariants, except in very special circumstances, e.g. \cite{huf}.

On the other hand, there are numerous upper bounds for the minimal degree $\beta(G,V)$ such that $F[V]^G$ is generated as an algebra by all invariants in degrees
not exceeding $\beta(G,V)$.
For example, a classical result of Noether \cite{noeth} states that if the characteristic of $F$ is zero and $G$ is finite of order $|G|$, then $\beta(G,V)\leq |G|$.
There is an extensive discussion of Noether bound and results about $\beta(G,V)$ in section 3 of \cite{smith}.
It was conjectured by Kemper that for $G\neq 1$, and arbitratry ground field $F$, the number $\beta(G, V)$ is at most $\dim V(|G|-1)$. Recently, this conjecture was proved by Symonds in \cite{sym}.

When one wants to find an invariant of $G$, it seems natural to consider an upper bound $\beta(G,V)$. However, if we wants to show that there are no invariants of small degrees
(as is our case), then we need to find lower bounds for $M_{G,V}$. Until now, there was no real impetus to consider such problem. 
We have investigated minimal degree of invariants of $G$ in general in \cite{jmz2}, where we have obtained its description for certain groups $G$.

\section{Cryptosystems based on invariants of infinite diagonalizable groups}\label{zero}

In this section we assume that the characteristic of the ground field $F$ is zero and we design and ivestigate the properties of a cryptosystem based on invariants of an infinite diagonalizable group.

Let us fix a number field $F=\mathbb{Q}(\theta)$ and the subring $Z=\mathbb{Z}[\theta]$ of the ring of algebraic integers of $\mathbb{Q}(\theta)$.
Choose a finite set $Q$ of integers of cardinality $q$ and a set $S_m=\{p_1, \ldots, p_m\}$ of elements of $Z$. 
The elements $p_i$ of $S_m$ need not be primitive and could be units of $Z$. 
Denote by $P_m$ the set of all products of elements from the set $S_m$.

\subsection{Design of the cryptosystems}\label{design}

To start, Alice chooses sets $Q$ and $S_m$ as above.
Afterwards, she chooses her secret key, which is the $n$-tuple of nonnegative integers $(e_1, \ldots, e_n)$, where one component, say $e_n$ equals $1$.
Then she will construct a set of generators $t_1, \ldots, t_s$ of $T$ in such a way that the monomial $f=x_1^{e_1}\ldots x_n^{e_n}$ is invariant under the action of each $t_i$, hence belongs to $F[V]^T$.

At the $i$-th step of the process, Alice chooses the $i$-th generator $t_i$ of the group $T$ as follows.

First, for every $k=1, \ldots, m$ and $1\leq j\leq n-1$ she chooses numbers $b^{(i)}_{k,j}$ from the set $Q$.
Then she computes the numbers $a^{(i)}_1, \ldots, a^{(i)}_{n-1}$ from the set $P_m$ as
$a^{(i)}_j=\prod_{k=1}^m p_k^{b^{(i)}_{k,j}}$ where $1\leq j\leq n-1$.
Alice then computes $a^{(i)}_n$ in such a way that the diagonal matrix $g_i=diag(a^{(i)}_1, \ldots, a^{(i)}_n)$ has $f=x_1^{e_1}\ldots x_n^{e_n}$ as an invariant.
Since she has chosen $e_n=1$, it is easy to see that the appropriate value of $a^{(i)}_n$ is $a^{(i)}_n=\prod_{j=1}^{n-1} (a^{(i)}_j)^{-e_j}.$

Once all generators $t_i$ of the group $T$ are constructed,  Alice chooses an invertible $n\times n$ matrix $P$ as a part of her secret key and computes conjugates $g_i=Pt_iP^{-1}$.
Alice then announces the diagonalizable group $G$ given by its generators $g_i$ for $i=1, \ldots, s$.

When she receives the encrypted message, she can use her secret key $P$ to switch from $G$ to $T$ and apply her previously chosen invariant $f=x_1^{e_1}\ldots x_n^{e_n}$ of $T$
to decrypt the message, as explained in section \ref{system}. She knows that $f$ is an invariant of $T$ because $T$ was constructed to satisfy that condition.

To remove the randomnes of the choice made during this process, Alice should use a cryptographically secure pseudorandom number generator.

\subsection{How to break the cryptosystems in partial cases}
We will explain how the above cryptosystem could be broken in the polynomial time for some rings $Z$.

\begin{lm}\label{break}
Assume that a ring $Z$ is such that the group of units of $Z$ is finite, $Z$ is an Euclidean domain, and the Euclidean algorithm over $Z$ runs in polynomial time in its input.
If a vector, encrypted by the above cryptosystem, has no zero components, then it can be decrypted in polynomial time.
\end{lm}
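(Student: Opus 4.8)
The plan is to exhibit an eavesdropper who, knowing only the public generators $g_i$ of $G$ and the intercepted ciphertext $u$, recovers the plaintext in polynomial time. The guiding observation is that $G=PTP^{-1}$ is conjugate to the diagonal group $T$, so the attacker may first replace $G$ by a diagonal group that he computes for himself, after which the whole decryption collapses to solving multiplicative relations among elements of $F^\times$ that are products of elements of $Z$. The three hypotheses on $Z$ are precisely what is needed to convert such multiplicative relations into ordinary integer linear algebra: a polynomial-time Euclidean algorithm yields greatest common divisors and hence a coprime (GCD-free) basis of a finite family of elements of $Z$, while finiteness of the unit group $Z^\times$ guarantees that the leftover ``unit part'' of each relation contributes only boundedly many torsion congruences, rather than forcing the (hard) computation of fundamental units.

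First I would diagonalize. The generators $g_i$ pairwise commute and are diagonalizable, each being conjugate to a diagonal matrix whose entries lie in $F$, so a common eigenbasis exists and can be found by standard linear algebra over $F$ (splitting characteristic polynomials that factor into linear factors) in polynomial time; let $P'$ be the resulting change-of-basis matrix, so that $\tilde g_i=(P')^{-1}g_iP'=\mathrm{diag}(c^{(i)}_1,\dots,c^{(i)}_n)$ with each $c^{(i)}_j\in F^\times$ a ratio of elements of $Z$. I transform the ciphertext and the public message points into these coordinates, $\tilde u=(P')^{-1}u$ and $\tilde v_i=(P')^{-1}v_i$. Since $\tilde u=\tilde h\,\tilde v_i$ for a diagonal $\tilde h\in\langle\tilde g_1,\dots,\tilde g_s\rangle$, recovering the message amounts to deciding, for each candidate $v_i$, whether the coordinatewise ratios $\rho_j=\tilde u_j/\tilde v_{i,j}$ are the diagonal of an element of the diagonal group, i.e.\ whether $\prod_{i'}(c^{(i')}_j)^{y_{i'}}=\rho_j$ for $j=1,\dots,n$ admits an integer solution $(y_1,\dots,y_s)$. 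Here the hypothesis that $u$, and hence $\tilde u$, has no zero components is used to guarantee that all these ratios are well defined and lie in $F^\times$.

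The core step is to solve this multiplicative system. Using repeated GCD computations I would compute a coprime basis $r_1,\dots,r_L$ of the finite set $\{c^{(i)}_j\}\cup\{\rho_j\}$ (clearing denominators into $Z$), so that each of these elements is recorded as a unit times a product of the $r_l$ with integer exponents read off along the way. Matching the exponent of each $r_l$ on both sides turns the multiplicative system into an inhomogeneous system of linear equations over $\mathbb Z$ in the unknowns $y_{i'}$, while the remaining unit part yields finitely many congruences modulo the order of $Z^\times$; both are solvable by integer linear algebra in polynomial time. Exactly one candidate $v_i$ admits a solution, by the separation property, which over a field of characteristic zero is automatic as observed above, and that $v_i$ is the decrypted message. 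Equivalently, the same computation produces a monomial invariant $\prod_j\tilde x_j^{d_j}$ of the diagonal group, characterized by the relations $\prod_j(c^{(i)}_j)^{d_j}=1$ for all $i$ together with the unit congruences, which then separates $S$.

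The step I expect to be the main obstacle is the coprime-basis computation together with the bookkeeping of the unit contribution: one must verify that a GCD-free basis of the $c^{(i)}_j$ and $\rho_j$ is produced with polynomially many Euclidean-algorithm calls, that its size and the bit-lengths of the extracted exponents stay polynomial, and that the finite-unit congruences (and the orders of the relevant roots of unity) can be obtained within the same budget. A secondary point requiring care is the diagonalization: confirming that a simultaneous eigenbasis can be chosen over $F$ and that passing to the coordinates $\tilde u,\tilde v_i$ keeps the relevant entries nonzero, so that the no-zero-component hypothesis is genuinely available where the ratios $\rho_j$ are formed.
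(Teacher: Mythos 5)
Your proposal is correct and its core machinery coincides with the paper's own proof: simultaneous diagonalization via polynomial-time factorization of the characteristic polynomials over the number field, a pairwise-coprime (``atom'') basis extracted by iterated GCD computations using the polynomial-time Euclidean algorithm, the finite unit group contributing only root-of-unity congruences (so no fundamental units need be computed), and a final reduction of the multiplicative relations to integer linear algebra. Where you diverge is the endgame. The paper solves the \emph{homogeneous} system $\prod_j (a^{(i)}_j)^{y_j}=1$ once, obtaining rational (Laurent-monomial) invariants of the diagonalized group, and decrypts by evaluating these invariants on the ciphertext --- this is exactly where the no-zero-component hypothesis enters, since the exponents may be negative. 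You instead run, for each candidate $v_i$, an \emph{inhomogeneous} solvability test $\prod_{i'}(c^{(i')}_j)^{y_{i'}}=\rho_j$, i.e., a direct orbit-membership test, and identify the unique candidate for which a solution exists. Both reductions are polynomial-time; yours buys freedom from having to argue that the particular invariants found actually separate $S$ (the paper leans on its characteristic-zero genericity discussion for that), at the price of solving one integer system per candidate rather than a single system in total. The two concerns you flag are shared by the paper's own argument: it asserts the atom process stabilizes within $qm$ rounds of GCDs, and it likewise uses the nonzero-coordinate hypothesis in the diagonalized coordinates, a point the lemma's statement leaves implicit.
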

\begin{proof}
If the group of units of $Z$ is finite, then it consists of roots of unity. Assume that it is generated by $\zeta_E$.

At first, we compute the characteristic polynomials of all matrices $g_i$ and find all of their eigenvalues. This can be done in polynomial time using the algorithm for factoring a polynomial over a number field described in Section 3.6.2. of \cite{cohen}. Hence the factorization of all characteristic polynomials can be done in polynomial time in $n$. 
Then, we follow the algorithm explained in the proof of Proposition 15.4 of \cite{hum} and simultaneously diagonalize all matrices $g_i$
and obtain generators $t_i'$ of the conjugate group $T'$ consisting of diagonal matrices. For simplicity of notation, we can assume that $T'=T$.
Actually, what is important for us are only the eigenvalues $a^{(i)}_1, \ldots, a^{(i)}_{n}$ and their order with respect to fixed order of the eigenvectors. We will not work with
the actual eigenvectors of $V$.

Since all eigenvalues are ratios of elements from $Z$, we can consider the set $X$ of integers that appear in the numerators or denominators of any eigenvalue of any matrix $t_i$.
Using Euclidean algorithm we can compute the set $Y$ of all greatest common divisors of all pairs of elements from $X$. Then we can write a partial factorization of
all elements of $X$ in the form where $x=yz$ and $y$ is a product of elements from $Y$ and $Z$ is not divisible by any element from $Y$. Afterward we replace $X$ by a new set $X'$ consisting
of all elements in $Y$ and of elements $z$ from the above factorization. In the next step we replace the set $Y$ by the set $Y'$ consisting of all greatest common divisors of all pairs of elements from $X'$. We continue in the same fashion and after finitely many steps this process will stabilize. Then we arrive at a set $Y^{(d)}$ of numbers that are pairwise coprime divisors of integers from $X$. Let us call elements of $Y^{(d)}$ atoms of $X$ and denote them by $\{a_1, \ldots a_{m'} \}$.
Since the Euclidean algorithm in $Z$ runs in polynomial time in $n$ and there are no more than $qm$ steps of the above process, we find the atoms in polynomial time in $n$.

For every atom $a$, every element $x$ of $X$ is either coprime to $a$, or is written as $x=a^lb$, where $b$ is coprime to $a$.
Every $a^{(i)}_j$ has the atom factorization $a^{(i)}_j = \zeta_E^{e_{i,j}} \prod_{k=1}^{m'} a_k^{b'^{(i)}_{k,j}}$, where $0\leq e_{i,j} <E$.
We can find an invariant of $T$ from the structure of these diagonal matrices by solving, in nonnegative integers, the system of $s$ equations
$\prod_{j=1}^m (a^{(i)}_j)^{y_j} =1$ in $n$ variables.
Since Charlie has the atom factorization of each element $a^{(i)}_j$, he can compare the exponents in the atom factorization and obtain a system of $s(m'+1)$ linear equations
$\sum_{j=1}^{m'} b'^{(i)}_{k,j} y_j =0$ and $\sum_{j=1}^{m'} e_{i,j} y_j =0$ with bounded coefficients $b'^{(i)}_{k,j}$ and $e_{i,j}$ in $n$ variables $y_j$.
An integer solution of this system can be found in polynomial time in $n$ - see subsection 1.5.2 of \cite{ip}.

The task to find a nonnegative integer solution of the linear system with integer coefficients is an NP-complete problem.
If a solution of our system that has all nonnegative components is found, then it corresponds to a polynomial invariant of $G$.

However, every integral solution corresponds to a rational invariant, that is a rational function that is invariant under the $G$-action.
If the intercepted encoded vector has no zero coordinates, then Charlie can use his rational invariants to decode the message.
\end{proof}

Let us remark that the assumption of the above lemma are satisfied for integers $Z=\mathbb{Z}$ or Gaussian integers $Z=\mathbb{Z}[i]$.
It is well known that the Euclidean algorithm runs in polynomial time over $\mathbb{Z}$ and $\mathbb{Z}[i]$. For a survey of algorithmic results see Section 3 of
\cite{af}.
Also, the above results can be extended further if we replace the assumption that $Z$ is Euclidean domain by the assumption that $Z$ is complex quadratic unique factorization domain.
According to \cite{kr} there is an algorithm, running in polynomial time, that computes gcd in such rings $Z$.

\subsection{Theory of divisibility and units in algebraic number fields}

Assume that $F$ is a number field, that is a finite extension of the field $\mathbb{Q}$ of rational numbers. Let $Z$ be a ring of algebraic integers of $F$.
In many rings $Z$ factorization of elements into primes is not unique (hence $Z$ are not unique factorization domains), see for example the case when $F=\mathbb{Q}(\sqrt{-5})$ in 2.3 of \cite{borsaf}.

Recall the theory of divisibility essentially due to Kummer from Section 3 of \cite{borsaf}. We replace an element $a\in Z$ by the principal ideal $(a)$ of $Z$ generated by $a$.
The ring $Z$ is a Dedekind domain and it has a theory of divisors. I particular, each ideal $\mathfrak{a}$ of $Z$ can be written uniquely as a product
$\mathfrak{a}=\mathfrak{p}_1\ldots \mathfrak{p_r}$, where $\mathfrak{p}_i$ for $i=1, \ldots r$ are prime ideals of $Z$.
The ideals $\mathfrak{a}$ and $\mathfrak{b}$ of $Z$ are called equivalent if there exists $\alpha\in Z^{\times}$ such that $\mathfrak{a}=\mathfrak{b}(\alpha)$. The set of all equivalency classes is called the class group of $F$. It is a finite group of order $h$, called the class number of $F$.
The ring $Z$ is a unique factorization domain if and only if $h=1$.

In relation to factorization of elements from $Z$ it is important to recall the structure of the group $U$ of units of $Z$. By the Dirichlet theorem, 
if $r$ is the number of real embeddings of $F$ and $s$ is the number of complex embeddings of $F$, then the group is isomorphic to a product of a finite group of roots of unity and a free group of rank $r+s-1$, whose generators are called fundamental units of $Z$.

\subsection{Security issue - the choice of the ring $Z$}

The choice of the ring $Z$ is perhaps the most critical since the security of the cryptosystem depends heavily on the arithmetic of the ring $Z$.

The atom or prime factorisation analogous to the one considered in the proof of Lemma \ref{break} is not available in suitable form for number fields in general.
For simplicity assume that $Z=\mathbb{Z}[\theta]$ coincides with the ring of algebraic integers of the field $\mathbb{Q}(\theta)$.
If $Z$ is a principal ideal domain but not a Euclidean domain, then we have a factorization of every element of $Z$ into a product of primitive elements and units of $Z$.
However, without the Euclidean algorithm, it is not clear if we can produce prime factorization of principal ideal in polynomial time.
If $Z$ is not a principal ideal domain, then instead of primitive elements we need to work with divisors. Namely,
for each $x\in Z$ there is the prime ideal decomposition $(x)=\mathfrak{p}_1\ldots \mathfrak{p}_l$, where $\mathfrak{p}_i$ are (not necessarily principal) ideals in $Z$. An ideal generated by each prime number $p$ splits up to a product of many prime ideals (their number does not exceed the degree of the extension $[\mathbb{Q}(\theta):\mathbb{Q}]$, and this number is attained for totally ramified primes $p$). The problem of finding the prime ideal factorization in $Z$ is very difficult. Its special case for $F=\mathbb{Q}$ is the prime factorization problem in $\mathbb{Z}$. The difficulty of factoring of a product of two large primes is the basis of the RSA public-key cryptosystem.
We should consider only those rings $Z=\mathbb{Z}[\theta]$ for which their class number is bigger than one. Such ring $Z$ is not a unique factorization domain (and consequently not a principal ideal domain and not an Euclidean domain).

Even if we assume that the prime factorization of principal ideals generated by $a^{(i)}_j$ is known, by itself it would not be enough to break the above system. The additional
difficulty lies in the structure of the group of units of $Z$. For example, if we choose all elements of $S_m$ to be units of $Z$, then the whole idea of atom or prime decomposition is utterly useless. In order to facilitate the conversion into a system of linear equations, we would need to determine a factorization of each appearing unit into a product of roots or unity and fundamental units of the ring $Z$. Finding a set of fundamental units of the ring $Z$ and decomposition of units of $Z$ into products of root of unity and fundamental units
is by itself a very difficult problem and we are not aware of any algorithm solving these problems in polynomial time.
Therefore the break described in Lemma \ref{break} cannot be duplicated for rings $Z$ that are not unique factorization domains or those containing units of infinite orders.
We remark that there is a plethora of examples of such rings $Z$ appearing in the algebraic number theory.

A combination of obstacles related to factorization of principal ideals and factorization of units of $Z$ as a product of fundamental units is the reason why we propose the above cryptosystem based on approprately selected $Z$. We are unable to find a polynomial algorithm for finding an invariant of the corresponding diagonalizable group $G$.

To summarize, we need to choose the ring $Z$ in such a way that it is not a unique factorization domain and preferably such that its class number is high.
There are numerous examples of rings of integers of number fields that satisfy this condition.
Secondly, we should choose $Z$ so that the rank of its group of units is high. Using Dirichlet theorem, this condition is easy to satisfy.

\subsection{Other security issues}\label{si}

We will consider other possible choices Alice can make and how they affect the security of the system.
The additional choices that affect the security of the crytosystem (besides the choice of $Z$) are the following.

\subsubsection{The choice of the set $S_m$}

We could choose elements $p_i$ from the set $S_m$ in such a way that some of them are primitive.
Also, we should choose them in such a way their norms will have many common prime factors $p$. If we chose them randomly, then there is a great probability that the prime ideals dividing $p$ in the prime decomposition of different $p_i$ are actually different.
Also, we could choose some elements of $S_m$ to be units of $Z$ in order to involve the structure of units of $Z$.

\subsubsection{Choice of the set $Q$}

A choice of a finite set $Q$ does not seem to be important hence we can take it to be small, for example $Q=\{-1, 0,1\}$.

\subsubsection{The choice of the secret key $(e_1, \ldots, e_n)$.}

Another important requirement we need to impose is that none of the entries $(e_1, \ldots, e_n)$ vanishes. The reason for this is to guarantee that the invariant $f$ we chose depends 
on all the variables $x_1, \ldots, x_n$. While we cannot guarantee that there are no invariants of $G$ built on fewer than $n$ variables, chosing our invariant $f$ that depends on all variables is a reasonable precaution. When we increase the number of generators $t_i$ of $T$, it is more likely that $T$ would not have invariants depending on small number of variables. A more careful analysis of this relationship would be desirable.

In order to prevent linear algebra attacks described in Section \ref{LA}, the secret key $(e_1, \ldots, e_n)$ must be chosen so that $E=\sum_{i=1}^n e_i$ is at least of the order of $n$.
For example she can choose $e_i\in \{1,2\}$ such that $\sum_{i=1}^n e_i = [\frac{3n}{2}]$. See also \ref{pa} a) below.

\subsubsection{Choice of the exponents $b^{(i)}_{k,j}$}

We would like to make sure that the minimal degree $M_T$ is close to $E$, which is the degree of $f$, or at least of the order $n$.
However, if the number $s$ of generators $t$ is high and all exponents $b^{(i)}_{k,j}$ are chosen randomly, we expect that $M_T$ is going to be of order $n$.
It is an interesting problem to investigate how to choose $b^{(i)}_{k,j}$ to guarantee that $M_T$ is sufficiently large, say bigger than $E/2$.

If we cannot gurantee that $M_T$ is of order $n$, then we can add another generator $diag(\zeta_E, \ldots, \zeta_E)$ to $T$.
That would require replacing the field $\mathbb{Q}(\theta)$ by $\mathbb{Q}(\theta, \zeta_E)$ and chainging the ring $Z$.

This would give away to Charlie the degree of our invariant $f$ but it would also make sure that $M_T=E$.
Since $E$ is of order $n$, this prevents the linear algebra break discussed in subsection \ref{LA}.

\subsubsection{The choice of the transition matrix $P$.}

The idea of using conjugate group $G$ instead of $T$ is to make matrices representing elements $g\in G$ as far away from the diagonal matrices as possible. Therefore the matrix
$P$ should be complicated, and with many nonzero entries, in order to accomplish this. Please see the next subsection \ref{pa} part b) about the security of conjugation by $P$. 

\subsection{Possible attacks}\label{pa}
We will now describe possible attacks on the above cryptosystem.

{\it a) Linear algebra attack}

Charlie might attempt to find an invariant of $G$ directly using the linear algebra attack described in Section \ref{LA}. The complexity of this approach is exponential
in $n$ if $M_G$ is of the order of $n$, which is likely going to be the case due to (random) choices of $a^{(i)}_j$ and which can be guaranteed by adding
another generator $diag(\zeta_n, \ldots, \zeta_n)$ to $T$. Therefore this linear algebra attack is ineffective.

{\it b) Finding the conjugate group $T$}

Charlie might attempt to find a conjugate group $T'$ of $G$, consisting of diagonal matrices.
In order to diagonalize $G$, he would find all eigenvalues of elements $g_j$ by computing their characteristic polynomials, which he can do in polynomial time in $n$. There exists 
a polynomial algorithm for factoring a polynomial over a number field - it is described in Section 3.6.2. of \cite{cohen}.
Hence the factorization of all characteristic polynomials can be done in polynomial time in $n$. 
Once the eigenvalues of matrices corresponding to every $g_i$ are computed, he can simultaneously diagonalize all matrices $g_i$ (see the proof of Proposition 15.4  of \cite{hum})
and obtain the generators of a group $T'$, in polynomial time in $n$.

This suggests that the conjugation by $P$, suggested in \cite{dima1} as a way of "hiding" the group $G$ and its invariants, is not secure without our context.

{\it c) Finding an invariant using ideal and units factorisation}

For rings $Z$, that are non-Euclidean or have infinite group of units, the attack described in Lemma \ref{break} is not viable.

\subsection{Possible modification of the system}

We have seen before that switching from the system of equations $\prod_{j=1}^m (a^{(i)}_j)^{y_j}=1$ to the linear system $\sum_{j=1}^m b^{(i)}_{k,j} y_j =0$ is important for possible breaking of the system. This can be accomplished by prime ideal factorization - see \ref{pa} c) above.
One possibility to prevent this method is to choose the numbers $a^{(i)}_j$ to be arbitrary and random complex numbers. Then the corresponding linear system would consists of
equations $\sum_{j=1}^m \log(a^{(i)}_j) y_j =0$. It appears to be difficult to find a solution of such general system in integers.

On the other hand for computational purposes we need to approximate the numbers $a^{(i)}_j$ by complex numbers with finite decimal expansions. This would create difficulty estimating
errors of the encryption process. For such system it would be necessary to estimate possible error of encryption and also it would be necessary that the vectors $v_i$ from the set $S$used in the encryption process could be distinguishable within the errors of such computations.

\subsection{More general systems considered in \cite{dima2}}

The main reason we were able to design a system for diagonalizable groups was that we were able to easily construct matrices that have a given monomial as its invariants.
In the case of finite diagonalizable $G$, a reasonable description of the invariants for diagonal matrices is given in \cite{jmz2}. For infinite diagonalizable
$G$ the situation is similar but we have equations instead of congruences.

One could hope that designing a system based on nonabelian $G$ would be more secure than that based on a diagonalizable group $T$ because it is more complicated to
find invariants of such $G$ than those of $T$.
A system based on invariants of nonabelian group $G$ would have an advantage that simultaneous diagonalization as described in \ref{pa} b) is not possible. Therefore the conjugation problem is more difficult to solve for nonabelian $G$.
Also, we need to take into account that the minimal degree of $G$ must be at least of order $n$ to prevent linear algebra attacks.

In the paper \cite{dima2} the authors have proposed a process of generating a more complicated (nonabelian) group $G$, its representation and a corresponding invariant
starting from simpler groups using four types of operations. Their main idea was that it would be more difficult to find an invariant of $G$ than that of the simpler groups.
We will investigate how this construction affects the minimal degrees of invariants since they are
important in regard to the possible linear algebra attack on the corresponding cryptosystem described in subsection \ref{LA}.

For the first operation, assume that $G\leq GL(V)$, where $V\simeq R^n$ is a free module over a ring $R$ of rank $n$; and a ring homomorphism $\pi : R\to R'$,
replacing $R$ with a new ring $R'$, are given.
If $R'$ is a direct summand of $R$ and $\pi$ is a projection onto $R'$ (in which case $R'$ is called smaller), then every invariant of $R[V]^G$ remains an invariant
of $R'[\pi(V)]^G$, hence this operation does not increase the minimal degree $M_{G, V}$.
If $R$ is embedded into $R'$, then $R[V]^G\subseteq R'[R'\otimes_R V]^G$, hence $M_{G, R'\otimes_R V}\leq M_{G, V}$.
The authors of \cite{dima2} do not specify what they mean when $R'$ is larger, and we were unable to follow their arguments.
However, if the kernel of the map $\pi$ is nontrivial, then some of the invariants can be annihilated using this process and the minimal degree can potentially increase.

The second operation replaces $G$ by a conjugated subgroup $H=h^{-1}Gh$ for some $h\in GL(V)$.
Since the algebras/rings $R[V]^G$ and $R[V]^H$ are isomorphic, we have the equality of the minimal degrees $M_{G, V}=M_{H, V}$.

The third operation requires two groups $G_1\leq GL(V_1)$ and $G_2\leq GL(V_2)$ and replaces them by their direct product $G_1\times G_2$ embedded in a natural way into
$GL(V_1\oplus V_2)$. In this case the isomorhism $R[V_1\oplus V_2]^{G_1\times G_2}\simeq R[V_1]^{G_1}\otimes R[V_2]^{G_2}$ implies
$M_{G_1\times G_2, V_1\oplus V_2}=\min\{M_{G_1, V_1}, M_{G_2, V_2}\}$, thus the minimal degree will not increase.

Finally, the fourth operation replaces $G$ by the wreath product $L=G\wr H$, where $H$ is a subgroup of the symmetric group $S_m$.
The group $L$ can be identified with the set of all $m+1$-tuples $(g_1, \ldots , g_m, \sigma)$, where $g_1,\ldots , g_m\in G$ and $\sigma\in H$.
The above element of $L$ acts on $V^{\oplus m}$ by the rule
\[(g_1, \ldots , g_m, \sigma)(v_1, \ldots, v_m)=(g_1 v_{\sigma(1)}, \ldots, g_m v_{\sigma(m)}).\]
The subgroup consisting of all elements with $\sigma=1$ is normal and it is isomorphic to the direct product $G^m$ and
$L$ is isomorphic to the semi-direct product $H\ltimes G^m$. Then $R[V^{\oplus m}]^L=(R[V^{\oplus m}]^{G^m})^H=((R[V]^G)^{\otimes m})^H$ and, for
any invariants $f_1, \ldots, f_m$ from $R[V]^G$, the element
\[\sum 1\otimes\ldots\otimes \underbrace{f_i}_{i-\mbox{th place}}\otimes\ldots\otimes 1\] is $L$-invariant.
Therefore $M_{L, V^{\oplus m}}\leq M_{G, V}$.

Summing up, all four operations as presented in \cite{dima2} do not increase the minimal degrees of invariants of given representations of the initial groups
(possibly with the exception of the first operation with non-injective map $\pi$).
Therefore, regardless of how complicated the resulting group $G$ and its representation is, it is no more secured against the linear algebra attack described in subsection \ref{LA}
and great care needs to be taken that the initial minimal degrees of the starting groups are large enough, say of the order $n$.
On the other hand, if the minimal degrees of the starting group is sufficiently large, then from the point of view of such linear algebra attack
it is not necessary to construct a more (structurally) complicated group or representation.

\subsection{Invariants of supergroups}

Another possible modification of the cryptosystem is obtained when the group $G$ and its invariants are replaced by a supergroup and its superinvariants. 
A significant difference that is exhibited in this case is that invariants of supergroups do not have a basis consting of monomials. Thus the structure of the invariants of supergroups is more complicated. 

We will not go further into rather complicated details about supergroups and their invariants but would like to refer an interested reader to the paper \cite{jmz2} where we have obtained results in this direction and stated potential application in cryptosystems based on relative invariants and absolute invariants of supergroups.

\section{Cryptosystem over finite rings $R$}

In order to make the cryptosystem build in Section \ref{zero} more effective and easier to implement we will make modification that will work over finite rings $R$ instead of over fields $F$.

The motivating example is the residue class ring $R$ of the ring of algebraic fields modulo an integer $m>1$. However we can consider cryptosystem over arbitrary finite commutative ring $R$.

\subsection{Structure theory for finite commutative rings $R$}

Recall the following structure theorems for finite commutative, finite local commutative rings and their groups of units from \cite{mcd}.

\begin{teo}(Theorem VI.2 of \cite{mcd})
Let $R$ be a finite commutative ring. Then $R$ is isomorphic to a direct sum of local rings.
\end{teo}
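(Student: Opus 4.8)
The plan is to realize $R$ as a direct product of quotients by prime powers via the Chinese Remainder Theorem. First I would record that a finite commutative ring is automatically Artinian, since the descending chain condition on ideals holds trivially when there are only finitely many ideals; in particular $R$ has only finitely many maximal ideals $\mathfrak{m}_1, \ldots, \mathfrak{m}_k$, and every prime ideal of $R$ is maximal. Consequently the nilradical and the Jacobson radical coincide, so $J = \mathfrak{m}_1 \cap \cdots \cap \mathfrak{m}_k$ consists of nilpotent elements.

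Next I would show that $J$ is in fact a nilpotent ideal. The descending chain $J \supseteq J^2 \supseteq \cdots$ stabilizes because $R$ is finite, say $J^N = J^{N+1}$; since $J^N$ is a finitely generated $R$-module satisfying $J \cdot J^N = J^N$, Nakayama's lemma forces $J^N = 0$. This nilpotency is what will make the kernel of the forthcoming decomposition map vanish.

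Then comes the main step, the $\mathrm{CRT}$ decomposition. Distinct maximal ideals are comaximal, and comaximality is inherited by powers, so the ideals $\mathfrak{m}_1^N, \ldots, \mathfrak{m}_k^N$ are pairwise comaximal. The natural homomorphism $R \to \prod_{i=1}^k R/\mathfrak{m}_i^N$ is therefore surjective by the Chinese Remainder Theorem, and its kernel is $\bigcap_i \mathfrak{m}_i^N = \prod_i \mathfrak{m}_i^N = (\mathfrak{m}_1 \cdots \mathfrak{m}_k)^N \subseteq J^N = 0$, where I use commutativity to rearrange the product and the inclusion $\prod_i \mathfrak{m}_i \subseteq \bigcap_i \mathfrak{m}_i = J$. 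Hence $R \cong \prod_{i=1}^k R/\mathfrak{m}_i^N$, a finite product, which is exactly the claimed direct sum of rings.

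Finally I would verify that each factor $R/\mathfrak{m}_i^N$ is local: any maximal ideal of the quotient pulls back to a maximal ideal of $R$ containing $\mathfrak{m}_i^N$, hence containing its radical $\mathfrak{m}_i$, and therefore equal to $\mathfrak{m}_i$; so $\mathfrak{m}_i/\mathfrak{m}_i^N$ is the unique maximal ideal. The step I expect to require the most care is the nilpotency of $J$ combined with the comaximality bookkeeping that forces the $\mathrm{CRT}$ kernel to vanish; once these are secured, the locality of the individual factors is immediate. An alternative route, which I would mention only as a remark, lifts a complete set of orthogonal idempotents from the semisimple quotient $R/J$ back to $R$ through the nilpotent ideal $J$, yielding the same product decomposition.
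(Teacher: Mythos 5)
Your proof is correct, but note that the paper does not prove this statement at all: it is quoted verbatim as Theorem VI.2 of McDonald's \emph{Finite rings with identity} and used as a black box, so there is no proof in the paper to compare against. Your argument is the standard structure-theorem proof for commutative Artinian rings specialized to the finite case: finiteness gives the Artinian property and finitely many maximal ideals, Nakayama forces $J^N=0$, pairwise comaximality of the $\mathfrak{m}_i^N$ plus the Chinese Remainder Theorem gives a surjection $R \to \prod_i R/\mathfrak{m}_i^N$ whose kernel $\bigl(\mathfrak{m}_1\cdots\mathfrak{m}_k\bigr)^N \subseteq J^N$ vanishes, and each factor is visibly local. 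All the steps check out, including the two points that genuinely need care (comaximality being inherited by powers, and intersection equaling product for pairwise comaximal ideals). Your concluding remark about lifting orthogonal idempotents through the nilpotent ideal $J$ is the other classical route and would work equally well; the CRT version you chose has the advantage of exhibiting the local factors explicitly as $R/\mathfrak{m}_i^N$, which is also what makes the decomposition canonical.
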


\begin{teo}(Theorems XVII.1 and XVIII.2 of \cite{mcd})
Let $R$ be a finite local commutative ring of characteristic $p^n$ with maximal ideal $\mathfrak{m}$ and residue field $k$.
Let $[k:\mathbb{Z}_p]=r$ and $u_1, \ldots, u_t$ be a minimal $R$-generating set of $\mathfrak{m}$.
Then the largest Galois extension $T$ (called the coefficient ring of $R$) of $\mathbb{Z}_p$ in $R$ is isomorphic to the Galois ring $GR(p^n, r)$,
and $R$ is a ring homomorphic image of the polynomial ring $T[X_1, \ldots, X_t]$.

The group of units $R^{\times}$ of $R$ is isomorphic to $(1+\mathfrak{m}) \times k^{\times}$. The Abelian $p$-group $1+\mathfrak{m}$ is called the one group of $R$.
\end{teo}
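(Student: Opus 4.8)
The plan is to derive everything from two facts about a finite local ring: the maximal ideal $\mathfrak m$ is nilpotent, say $\mathfrak m^N=0$, and $R$ is complete in its $\mathfrak m$-adic topology, so Hensel's lemma is available. The organizing device is the \emph{Teichm\"uller lift} of the multiplicative group of the residue field $k=\mathbb F_{p^r}$. I would split the argument into three stages: construct the coefficient ring $T$ and identify it with $GR(p^n,r)$; exhibit $R$ as a quotient of $T[X_1,\ldots,X_t]$; and split the unit group.

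For the coefficient ring, the starting observation is that $1+\mathfrak m$ is a finite abelian $p$-group (each $1+x$ with $x\in\mathfrak m$ nilpotent has $p$-power order, by the binomial theorem together with $\operatorname{char}R=p^n$), whereas $|k^\times|=p^r-1$ is prime to $p$. Hence the reduction sequence
\[ 1 \to 1+\mathfrak m \to R^\times \xrightarrow{\ \pi\ } k^\times \to 1 \]
splits canonically: the prime-to-$p$ Hall subgroup of $R^\times$, of order $p^r-1$, maps isomorphically onto $k^\times$, and adjoining $0$ gives a multiplicative transversal of $\pi$ (the Teichm\"uller set). I would then pick a Teichm\"uller lift $\theta\in R$ of a primitive element $\bar\theta$ of $k$ over $\mathbb F_p$ and set $T=(\mathbb Z/p^n\mathbb Z)[\theta]$, where $\mathbb Z/p^n\mathbb Z$ is the prime ring. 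Lifting the (separable) minimal polynomial of $\bar\theta$ to a monic $g$ of degree $r$ over $\mathbb Z/p^n\mathbb Z$ and applying Hensel's lemma shows $\theta$ may be taken as a root of $g$; since $1,\theta,\ldots,\theta^{r-1}$ are linearly independent modulo $\mathfrak m$, Nakayama's lemma makes them a free basis, giving $T\cong(\mathbb Z/p^n\mathbb Z)[X]/(g)=GR(p^n,r)$. Maximality and uniqueness of $T$ among Galois (unramified) extensions of the prime ring follow from the uniqueness of Teichm\"uller lifts.

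The remaining two stages are essentially formal. Since $T$ surjects onto $k$ we have $R=T+\mathfrak m$, and as $\mathfrak m=\sum_i Ru_i$ I would iterate
\[ R = T+\mathfrak m = T+\textstyle\sum_i Tu_i+\mathfrak m^2 = \cdots = T[u_1,\ldots,u_t], \]
the chain terminating because $\mathfrak m^N=0$; the evaluation map $T[X_1,\ldots,X_t]\to R$ sending $X_i\mapsto u_i$ is then the required surjection. For the unit group, the section constructed in stage one realizes $k^\times$ as a subgroup of $R^\times$ of order coprime to $|1+\mathfrak m|$, and in a finite abelian group two subgroups of coprime order whose orders multiply to the whole order form an internal direct product, yielding $R^\times\cong(1+\mathfrak m)\times k^\times$.

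I expect the main obstacle to be the first stage, namely the existence and uniqueness of the coefficient ring: one must run Hensel's lemma carefully to produce $\theta$ as a genuine root of a degree-$r$ monic lift, verify freeness over the prime ring via Nakayama, and establish maximality among unramified subextensions. Once the $p$-group structure of $1+\mathfrak m$ and the Teichm\"uller section are in place, the filtration argument for the quotient presentation and the coprime-order decomposition of $R^\times$ are routine.
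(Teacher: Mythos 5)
The paper offers no proof of this statement at all: it is recalled verbatim from McDonald's book (Theorems XVII.1 and XVIII.2 of \cite{mcd}) and used as a black box, so there is no in-paper argument to compare yours against; the relevant comparison is with the standard proof in the literature, and your sketch is essentially that proof. The three pillars you use are the right ones: nilpotence of $\mathfrak{m}$ makes $R$ trivially $\mathfrak{m}$-adically complete so Hensel's lemma applies; the coprime-order (Hall/Teichm\"uller) splitting of the finite abelian group $R^{\times}$ gives both the multiplicative section of $R^{\times}\to k^{\times}$ and the decomposition $R^{\times}\cong(1+\mathfrak{m})\times k^{\times}$; and the filtration $R=T+\mathfrak{m}$, $\mathfrak{m}=\sum_i Ru_i$, $\mathfrak{m}^N=0$ yields the surjection $T[X_1,\ldots,X_t]\to R$. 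The outline is correct.

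Two steps deserve tightening. (i) Nakayama's lemma over $\mathbb{Z}/p^n\mathbb{Z}$ gives only that $1,\theta,\ldots,\theta^{r-1}$ generate $T$; since the base ring is not a field, independence of their images modulo $\mathfrak{m}$ does not formally yield freeness (compare $\mathbb{Z}/p\mathbb{Z}$ as a module over $\mathbb{Z}/p^2\mathbb{Z}$: one generator, independent mod $p$, not free). You need the supplementary argument: given a relation $\sum a_i\theta^i=0$, factor out the largest power $p^s$ dividing all $a_i$; the remaining combination has nonzero image in $k$ by independence over $\mathbb{F}_p$, hence is a unit of $R$, so $p^s=0$ in $R$ and $s\geq n$, i.e.\ all $a_i=0$. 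Equivalently, the kernel of $(\mathbb{Z}/p^n\mathbb{Z})[X]/(g)\twoheadrightarrow T$ is an ideal of the Galois ring, hence of the form $(p^i)$, and $\operatorname{char}T=p^n$ forces it to vanish. (ii) The Teichm\"uller lift of $\bar\theta$ and the Hensel root of an arbitrary monic lift $g$ of its minimal polynomial need not coincide; they agree when $g$ is taken to be the Hensel factor of $X^{d}-1$ with $d=\operatorname{ord}_{k^{\times}}(\bar\theta)$ prime to $p$. This conflation is harmless for the isomorphism $T\cong GR(p^n,r)$, but for the maximality claim you should take $\bar\theta$ to generate $k^{\times}$ (not merely to be a primitive element), so that $T$ contains the entire prime-to-$p$ Hall subgroup of $R^{\times}$; then any Galois subextension of the prime ring, being generated by its own prime-to-$p$ roots of unity, lands inside $T$.
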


Gilmer has characterized when $R^{\times}$ is cyclic.

\begin{teo}(Theorem XVIII.9 of \cite{mcd})
Let $R$ be a finite local commutative ring. Then $R^{\times}$ is cyclic if and only if $1+\mathfrak{m}$ is cyclic. In this case $R$ is isomorphic to one of the following
\begin{itemize}
\item{$GF(p^t)$ (if $\mathfrak{m}=0$)}
\item{$\mathbb{Z}/p^n\mathbb{Z}$ (if $p\geq 3$ and $n>1$)}
\item{$\mathbb{Z}/4\mathbb{Z}$}
\item{$(\mathbb{Z}/p\mathbb{Z})[X]/(X^2)$}
\item{$(\mathbb{Z}/2\mathbb{Z})[X]/(X^3)$}
\item{$(\mathbb{Z}/4\mathbb{Z})[X]/(2X,X^2-2)$}
\end{itemize}
\end{teo}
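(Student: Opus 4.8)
The plan is to build on the preceding structure theorem, which supplies the isomorphism $R^\times \cong (1+\mathfrak{m}) \times k^\times$. First I would dispose of the equivalence. The residue field $k = R/\mathfrak{m}$ is a finite field $GF(p^r)$, so $k^\times$ is automatically cyclic of order $p^r - 1$, while the one group $1 + \mathfrak{m}$ is an abelian $p$-group; thus the two factors have coprime orders. A direct product of two finite abelian groups of coprime orders is cyclic if and only if each factor is, and since $k^\times$ is always cyclic, $R^\times$ is cyclic precisely when $1 + \mathfrak{m}$ is. This settles the first assertion with no case analysis.

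For the classification I would first reduce to chain rings over the prime field. Assume $\mathfrak{m} \neq 0$, the case $\mathfrak{m} = 0$ giving the field $GF(p^t)$. Cyclicity of $1 + \mathfrak{m}$ forces cyclicity of the quotient $(1+\mathfrak{m})/(1+\mathfrak{m}^2)$, and the assignment $1 + x \mapsto x \bmod \mathfrak{m}^2$ identifies this quotient with the additive group $\mathfrak{m}/\mathfrak{m}^2$, a $k$-vector space and hence an elementary abelian $p$-group of order $p^{\,r \cdot \dim_k \mathfrak{m}/\mathfrak{m}^2}$. An elementary abelian group is cyclic only when its order is $1$ or $p$; Nakayama rules out order $1$, so $r = 1$ (that is, $k = GF(p)$) and $\dim_k \mathfrak{m}/\mathfrak{m}^2 = 1$. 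The latter says $\mathfrak{m}$ is principal, so $R$ is a finite chain ring with residue field $GF(p)$, every quotient $\mathfrak{m}^i/\mathfrak{m}^{i+1}$ has order $p$, and $|R| = p^e$ where $\mathfrak{m}^e = 0 \neq \mathfrak{m}^{e-1}$.

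From here I would split on the characteristic $p^n$ of $R$, using throughout the criterion that a nontrivial finite abelian $p$-group $A$ is cyclic if and only if its $p$-torsion $A[p] = \{u : u^p = 1\}$ has order exactly $p$. Applied to $A = 1+\mathfrak{m}$ this reduces matters to counting solutions $x \in \mathfrak{m}$ of $(1+x)^p = 1$, i.e. of $px + \binom{p}{2}x^2 + \cdots + x^p = 0$. When the characteristic is $p$ (so $R \cong GF(p)[X]/(X^e)$), the identity $(1+y)^p = 1 + y^p$ makes the exponent of $1+\mathfrak{m}$ equal to $p^{\lceil \log_p e\rceil}$, which matches the order $p^{e-1}$ only for $e = 2$ (any $p$) and for $e = 3,\ p = 2$, yielding $GF(p)[X]/(X^2)$ and $GF(2)[X]/(X^3)$. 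When $R$ is unramified of characteristic $p^n$ with $n \geq 2$, i.e. $R \cong \mathbb{Z}/p^n\mathbb{Z}$, the classical structure of the one-units makes $1+\mathfrak{m}$ cyclic exactly for $p$ odd and for $p = 2,\ n = 2$, producing $\mathbb{Z}/p^n\mathbb{Z}$ ($p \geq 3$) and $\mathbb{Z}/4\mathbb{Z}$.

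The remaining, and hardest, case is a ramified chain ring of characteristic $p^n$, $n \geq 2$, where $(p) = \mathfrak{m}^s$ with $s \geq 2$. Here I would compare the $\mathfrak{m}$-adic valuations of the two dominant terms of $(1+x)^p - 1$, namely $px$ of valuation $s + v(x)$ and $x^p$ of valuation $p\,v(x)$, in order to count $|A[p]|$. The expected upshot is that for odd $p$ the competition between these terms always leaves $|A[p]| > p$, so $1+\mathfrak{m}$ is never cyclic; only for $p = 2$, where $(1+x)^2 - 1 = x(x+2)$, can the count collapse to $|A[2]| = 2$, and pinning down exactly when it does isolates the single ring $(\mathbb{Z}/4\mathbb{Z})[X]/(2X, X^2 - 2)$. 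I expect this ramified mixed-characteristic analysis — disentangling the valuations of $px$ versus $x^p$ and showing $p = 2$ is the sole survivor — to be the main obstacle, since it is precisely where the exceptional small rings enter and where the valuation bookkeeping is most delicate.
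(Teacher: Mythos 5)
The paper itself contains no proof of this statement: it is quoted as background from McDonald's book (Theorem XVIII.9, the characterization due to Gilmer), so there is no internal argument to compare yours against, and your attempt has to be judged on its own merits. On that basis, most of what you write is correct and is the standard route: the coprime-order argument for the equivalence; the reduction via $(1+\mathfrak{m})/(1+\mathfrak{m}^2)\cong \mathfrak{m}/\mathfrak{m}^2$ and Nakayama to finite chain rings with residue field $GF(p)$ and $|R|=p^e$; the equal-characteristic case via $(1+y)^p=1+y^p$ (exponent $p^{\lceil\log_p e\rceil}$ against order $p^{e-1}$, giving $e=2$ for all $p$ and $e=3$ for $p=2$); and the unramified case via the classical structure of $(\mathbb{Z}/p^n\mathbb{Z})^{\times}$. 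These parts are complete.

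The genuine gap is the ramified mixed-characteristic case, which you present only as an ``expected upshot.'' Moreover, the plan as you state it has a latent flaw: computing $v\bigl((1+x)^p-1\bigr)$ as $\min\bigl(s+v(x),\,p\,v(x)\bigr)$ fails exactly when $(p-1)v(x)=s$, where the terms $px$ and $x^p$ can cancel, so an exact count of $A[p]$ along these lines is delicate. You can close the gap and avoid that bookkeeping entirely by exhibiting a subgroup of $A[p]$ instead of computing it: put $t=\max\bigl(e-s,\lceil e/p\rceil\bigr)$; for $x\in\mathfrak{m}^{t}$ every middle term satisfies $v\bigl(\binom{p}{j}x^{j}\bigr)\geq s+t\geq e$ and $v(x^{p})\geq pt\geq e$, so $1+\mathfrak{m}^{t}\subseteq A[p]$ and $|A[p]|\geq p^{\min(s,\;e-\lceil e/p\rceil)}$. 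For $p$ odd, $s\geq 2$ and $e\geq s+1\geq 3$ give $\min\bigl(s,e-\lceil e/p\rceil\bigr)\geq 2$, so $1+\mathfrak{m}$ is never cyclic, with no case analysis on $(p-1)\mid s$. For $p=2$, cyclicity forces $\min\bigl(s,\lfloor e/2\rfloor\bigr)\leq 1$, hence $e=3$, $s=2$, characteristic $4$; then $2\pi\in\mathfrak{m}^{3}=0$, and $\pi^{2}=2(1+x)=2+2x=2$, so $R\cong(\mathbb{Z}/4\mathbb{Z})[X]/(2X,X^{2}-2)$; finally $(1+X)^{2}=3$ shows this ring's one group really is cyclic of order $4$, so it belongs on the list. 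With that paragraph supplied, your proof is complete.
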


Hence in most cases $R^{\times}$ is not cyclic.

We will work with finite rings $R$ obtained as residue rings of the ring of algebraic integers $Z$ of number fields $F$ modulo an ideal $\mathfrak{a}$ of $Z$; and specialize further to 
the case when $\mathfrak{a}=(m)$ and $m\in Z$.

The structure of units $U(R)$ of the residue classes $R$ of $Z$ modulo a power of a prime ideal $\mathfrak{p}$ is rather complicated and described in Theorem 2 and 3 of \cite{nak}. For simplicity we quote here only Theorem 1 of \cite{nak} describing the $p$-rank of $U(Z/\mathfrak{p}^{N+1})$.

Let $\mathfrak{p}$ be a prime ideal of $Z$ dividing a prime number $p$ of $\mathbb{Q}$, $e$ and $f$ be the ramification index and the degree of $\mathfrak{p}$ over $\mathbb{Q}$. Denote
by $e_1=[\frac{e}{p-1}]$. The following statement is due to Hasse and Takenouchi.

\begin{teo}(Theorem 1 of \cite{nak})  
The $p$-rank $R_N$ of $U(Z/\mathfrak{p}^{N+1})$ is given by
\begin{itemize}
\item{$(N-[\frac{N}{p}])f$ if $0\leq N<e+e_1$}
\item{$ef$ if $N\geq e+e_1$ and the primitive $p$-th root of unity does not belong to the $p$-adic completion $F_{\mathfrak{p}}$}
\item{$ef+1$ if $N\geq e+e_1$ and the primitive $p$-th root of unity belongs to the $p$-adic completion $F_{\mathfrak{p}}$.}
\end{itemize}
\end{teo}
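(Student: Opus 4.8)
The statement is quoted from \cite{nak}, but here is how I would reconstruct a proof. The plan is to reduce the computation of the $p$-rank to a purely local question about the group of principal units, and then to analyze the $p$-th power map along the standard filtration.

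First I would pass to the completion $F_{\mathfrak{p}}$, with ring of integers $\mathcal{O}$, maximal ideal $\mathfrak{P}=(\pi)$, residue field $\mathbb{F}_{p^f}$, and normalized valuation $v$ satisfying $v(\pi)=1$ and $v(p)=e$. Since $Z/\mathfrak{p}^{N+1}\cong\mathcal{O}/\mathfrak{P}^{N+1}$, the Teichm\"uller splitting yields $U(Z/\mathfrak{p}^{N+1})\cong\mathbb{F}_{p^f}^{\times}\times\big(U^{(1)}/U^{(N+1)}\big)$, where $U^{(i)}=1+\mathfrak{P}^{i}$. The factor $\mathbb{F}_{p^f}^{\times}$ has order prime to $p$ and so contributes nothing to $R_N$; hence $R_N=\dim_{\mathbb{F}_p}U^{(1)}/\big((U^{(1)})^{p}\,U^{(N+1)}\big)$, which is exactly the $p$-rank of the finite abelian $p$-group $U^{(1)}/U^{(N+1)}$.

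Next I would filter by the subgroups $U^{(j)}$ and compute this dimension level by level. For each $1\le j\le N$ the graded piece $U^{(j)}/U^{(j+1)}\cong\mathbb{F}_{p^f}$ (via $1+\pi^{j}x\mapsto\bar x$) is an $f$-dimensional $\mathbb{F}_p$-space, and the surviving contribution at level $j$ is the cokernel in this piece of the $p$-th power map coming from a lower level. The key computation is the binomial expansion $(1+\pi^{i}x)^{p}=1+p\pi^{i}x+\cdots+\pi^{pi}x^{p}$: comparing $v(p\pi^{i})=e+i$ with $v(\pi^{pi})=pi$ shows that for $(p-1)i<e$ the map raises the level to $pi$ and induces the Frobenius $\bar x\mapsto\bar x^{\,p}$ on the graded piece, while for $(p-1)i>e$ it raises the level to $i+e$ and induces multiplication by the unit $\overline{p/\pi^{e}}$. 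Both maps are bijective on $\mathbb{F}_{p^f}$, so every level hit in one of these two ways is completely annihilated; the higher-order terms in the expansion land in strictly deeper filtration and are absorbed on passing to the graded, which is the bookkeeping point to verify with care.

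Finally I would count the surviving levels. When $N<e+e_1$ only the first regime occurs, the killed levels are precisely the multiples of $p$ in $\{1,\dots,N\}$, of which there are $\lfloor N/p\rfloor$, giving $R_N=(N-\lfloor N/p\rfloor)f$. When $N\ge e+e_1$ both regimes contribute and one checks that exactly $e$ levels survive with full multiplicity $f$, producing the base value $ef$; the one remaining subtlety is the boundary level $j_0=e+e_1$, which appears precisely when $(p-1)\mid e$ (equivalently $e_1=e/(p-1)$). There the induced map is the additive Artin--Schreier operator $\bar x\mapsto\bar x^{\,p}+\bar c\,\bar x$ with $\bar c=\overline{p/\pi^{e}}$; being $\mathbb{F}_p$-linear its kernel has dimension $0$ or $1$, and it is nonzero exactly when $X^{p-1}=-\bar c$ is solvable in $\mathbb{F}_{p^f}$, which is the local condition $\zeta_p\in F_{\mathfrak{p}}$. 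Since $\zeta_p\in F_{\mathfrak{p}}$ already forces $(p-1)\mid e$, no boundary level occurs when $(p-1)\nmid e$ and one gets $ef$ there automatically. I expect this boundary analysis --- identifying the Artin--Schreier kernel and tying its nonvanishing to the presence of a primitive $p$-th root of unity in $F_{\mathfrak{p}}$ --- to be the main obstacle, whereas the reduction and the two generic regimes are routine once the filtration is set up.
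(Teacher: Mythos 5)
The paper itself contains no proof of this statement: it is quoted verbatim as Theorem~1 of \cite{nak}, where it is credited to Hasse and Takenouchi, so there is no internal argument to compare yours against. Judged on its own merits, your reconstruction is correct, and it is in fact the classical route to results of this kind. The Teichm\"uller reduction to $R_N=\dim_{\mathbb{F}_p} U^{(1)}/\bigl((U^{(1)})^{p}U^{(N+1)}\bigr)$ is right; the trichotomy for the $p$-th power map along the filtration (Frobenius landing at level $pi$ when $(p-1)i<e$, multiplication by the unit $\overline{p/\pi^{e}}$ landing at level $i+e$ when $(p-1)i>e$, and the Artin--Schreier operator $\bar x\mapsto\bar x^{p}+\bar c\,\bar x$ at the boundary level, possible only when $(p-1)\mid e$) is the standard computation; and the level count does come out as you say, one useful check being that the two kinds of image levels never coincide (type-one levels satisfy $pi<e+e/(p-1)$ while type-two levels exceed $e+e/(p-1)$), so no level is counted twice. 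Two points you flag are exactly where the care is needed, and both go through: (i) the graded bookkeeping is legitimate because the group is abelian, so every element of the subgroup of $p$-th powers is a single $p$-th power, and its leading term sits at the level dictated by the regime of its source; the only exception, a boundary element whose leading coefficient lies in the Artin--Schreier kernel, has its $p$-th power landing at levels beyond $e+e_1$, which in the regime $N\ge e+e_1$ are already fully hit by the second mechanism, so the count is unaffected; (ii) the equivalence between nonvanishing of the Artin--Schreier kernel and $\zeta_p\in F_{\mathfrak{p}}$ requires Hensel's lemma in the backward direction (a nonzero root of $T^{p}+\bar c\,T$ is simple, hence lifts to a root of $\bigl((1+\pi^{e_1}T)^{p}-1\bigr)/\pi^{pe_1}$, i.e.\ to a genuine primitive $p$-th root of unity), and this tool is available precisely because you passed to the completion at the outset. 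With these two verifications written out in full, your sketch becomes a complete proof.
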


An interesting connection of the structure $U(Z/\mathfrak{a})$ to multiplicative semigroups and Fermat-Euler theorem in algebraic number field is given in \cite{laspor}.

\subsection{Modification of the cryptosystem to finite rings}

Assume that $F$ is a number field, that is a finite extension of the field $\mathbb{Q}$ of rational numbers. Let $Z$ be a ring of algebraic integers of $F$.

Let $\mathfrak{a}=\mathfrak{p}_1^{k_1}\ldots \mathfrak{p}_r^{k_r}$, where $\mathfrak{p}_i$ for $i=1, \ldots r$ are distinct prime ideals of $Z$ and $k_i$ are the corresponding multiplicities. We say that elements $\alpha, \beta \in Z$ are called congruent modulo $\mathfrak{a}$, and write $\alpha\equiv \beta \pmod{\mathfrak{a}}$ if $\alpha-\beta$ is divisible by $\mathfrak{a}$. The equivalence classes of this congruence form a residue class ring $R=Z/\mathfrak{a}$. Then $R$ is a finite commutative ring that is isomorphic to a direct sum of local rings $Z/\mathfrak{p}_i^{k_i}$. If $\mathfrak{p}$ has the degree of inertia $f$, then the norm $Norm(\mathfrak{p})=p^f$ for a rational prime $p$ and the cardinality of the local ring $Z/\mathfrak{p}^{k}$ is $p^{fk}$. In particular, the residue class ring $Z/\mathfrak{p}$ is a finite field of cardinality $q=p^f$.
This way we have a concrete realization of the structure theorem for finite commutative rings for residue class rings $R=Z/\mathfrak{a}$.

There is a theory of divisors on $R$ induced from the theory of divisors on $Z$. The finite ring $R$ has divisors of zero and the ideals of $R$ are in one-to-one correspondence to ideals of $Z$ dividing $\mathfrak{a}$. Thus $R$ has $r$ distinct prime ideals $\overline{\mathfrak{p}_i}$ for $i=1, \ldots, r$ and the product
$\overline{\mathfrak{p}}_1^{k_1}\ldots \overline{\mathfrak{p_r}}^{k_r}$ vanishes.

\subsubsection{Choice of the ring $R$ and its ideal $\mathfrak{a}$}

Let us make the following observation. 

\begin{pr} Let $Z$ be the ring of algebraic integers of a number field $F$, and $\mathfrak{a}$ be an ideal of $Z$. Then the residue class ring $R=Z/\mathfrak{a}$ is 
a principal ideal ring that is isomorphic to a finite product of local rings. 
\end{pr}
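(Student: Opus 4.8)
The plan is to reduce the claim to two independent facts: that a finite direct product of principal ideal rings is again a principal ideal ring, and that each local factor $Z/\mathfrak{p}_i^{k_i}$ is a principal ideal ring. The decomposition $R \cong \bigoplus_{i=1}^r Z/\mathfrak{p}_i^{k_i}$ has already been recorded above (it comes from the Chinese Remainder Theorem, since the $\mathfrak{p}_i^{k_i}$ are pairwise comaximal), so this also settles the assertion that $R$ is isomorphic to a finite product of local rings; what remains is the \emph{principal ideal ring} part.

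First I would dispose of the product step. If $R = R_1 \times \cdots \times R_r$, then every ideal $I$ of $R$ splits as $I = I_1 \times \cdots \times I_r$ with $I_j$ an ideal of $R_j$. Assuming each $I_j = (a_j)$ is principal, the single element $a = (a_1, \ldots, a_r)$ generates $I$, because componentwise multiplication gives $Ra = (a_1)\times\cdots\times(a_r) = I$. Hence a finite product of principal ideal rings is a principal ideal ring, and it suffices to treat one local factor $Z/\mathfrak{p}^k$.

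The heart of the argument is showing $Z/\mathfrak{p}^k$ is principal. Since $Z$ is a Dedekind domain, containment of ideals is divisibility, so the only ideals of $Z$ containing $\mathfrak{p}^k$ are the powers $\mathfrak{p}^0 = Z \supseteq \mathfrak{p} \supseteq \cdots \supseteq \mathfrak{p}^k$; passing to the quotient, the ideals of $Z/\mathfrak{p}^k$ form the finite chain $\overline{\mathfrak{p}}^{\,0} \supseteq \overline{\mathfrak{p}} \supseteq \cdots \supseteq \overline{\mathfrak{p}}^{\,k} = 0$. Because $\mathfrak{p} \neq \mathfrak{p}^2$, I can choose a uniformizer $\pi \in \mathfrak{p}\setminus\mathfrak{p}^2$, that is, an element of $\mathfrak{p}$-adic valuation one. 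Writing $(\pi) = \mathfrak{p}\,\mathfrak{b}$ with $\mathfrak{b}$ coprime to $\mathfrak{p}$, and using that the sum of two ideals in a Dedekind domain is their greatest common divisor, a valuation-by-valuation computation yields $(\pi^j) + \mathfrak{p}^k = \mathfrak{p}^j$ for every $0 \le j \le k$. Reducing modulo $\mathfrak{p}^k$ shows $\overline{\mathfrak{p}}^{\,j} = (\overline{\pi}^{\,j})$, so every ideal of $Z/\mathfrak{p}^k$ is principal.

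The main obstacle is this last valuation computation and, more precisely, justifying that the ideals of $Z$ between $\mathfrak{p}^k$ and $Z$ are exactly the powers of $\mathfrak{p}$; both rest on the unique factorization of ideals in the Dedekind domain $Z$ together with the existence of an element of valuation exactly one, which are standard but must be invoked carefully. Once these are in hand, combining the local result with the product step completes the proof.
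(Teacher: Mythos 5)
Your proof is correct, and it shares the paper's two main ingredients: the Chinese Remainder Theorem decomposition of $R$ into local factors and the choice of uniformizers $\pi_i \in \mathfrak{p}_i \setminus \mathfrak{p}_i^2$. The assembly, however, is genuinely different. The paper argues globally: it classifies the ideals of $R$ as $\overline{\mathfrak{b}} = \overline{\mathfrak{p}}_1^{l_1}\cdots\overline{\mathfrak{p}}_r^{l_r}$ with $0 \le l_i \le k_i$, and exhibits for each one the single generator $\prod_{i=1}^r \overline{\pi}_i^{l_i}$, which forces the extra CRT condition $\pi_i \equiv 1 \pmod{\mathfrak{p}_j}$ for $j \neq i$ so that each $\pi_i$ has valuation zero at the other primes dividing $\mathfrak{a}$. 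You instead modularize: first the (easy, standard) lemma that a finite product of principal ideal rings is a principal ideal ring, since ideals split componentwise and a tuple of componentwise generators generates; then the purely local statement that $Z/\mathfrak{p}^k$ is a chain ring with ideals $\overline{\mathfrak{p}}^{\,j} = (\overline{\pi}^{\,j})$, justified by the valuation computation $(\pi^j) + \mathfrak{p}^k = \mathfrak{p}^j$. (A cosmetic difference: the paper realizes the local factors as $Z_i/\mathfrak{p}_i^{k_i}$ via localization, you as the quotients $Z/\mathfrak{p}_i^{k_i}$; these are canonically isomorphic.) Your route avoids any congruence conditions on the uniformizers and each step is a self-contained standard fact, so it is more modular and, as a bonus, it actually proves rather than assumes that each factor is local (the ideals form a chain, so $\overline{\mathfrak{p}}$ is the unique maximal ideal). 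The paper's route is more explicit, producing a concrete generator for every ideal of $R$ in one stroke, and its CRT isomorphism is stated so as to also identify $U(R) \cong \prod_i U(R_i)$, a fact the paper reuses immediately afterwards. Both arguments rest identically on unique factorization of ideals in the Dedekind domain $Z$ (containment equals divisibility), so neither is more general; the choice is one of organization.
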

\begin{proof}
Let $\mathfrak{a}=\mathfrak{p}_1^{k_1}\ldots \mathfrak{p}_r^{k_r}$, where $\mathfrak{p}_i$ for $i=1, \ldots r$ are distinct prime ideals of $Z$ and $k_i$ are the corresponding multiplicities.

For each $i=1, \ldots, r$ denote by $Z_i$ the localization of $Z$ with respect to the prime ideal $\mathfrak{p}_i$ and by $R_i=Z_i/\mathfrak{p}_i^{k_i}$ the local ring with the 
maximal ideal $\mathfrak{p}_iR_i$. It follows from the Chinese remainder theorem that the map $\phi:R \to \prod_{i=1}^r R_i$ that sends $ x \pmod{\mathfrak{a}}$ to 
$(x \pmod{\mathfrak{p}_i^{k_i}})_{i=1}^r$ is an isomorphism of rings. Moreover, the groups of units $U(R)$ of $R$ and $U(\prod_{i=1}^r R_i)$ are isomorphic under this map.

Ideals $\overline{\mathfrak{b}}$ of $R$ are of the form $\overline{\mathfrak{p}}_1^{l_1}\ldots \overline{\mathfrak{p}}_r^{k_r}$, where $0\leq l_i\leq k_i$ for each $i=1, \ldots, r$.
If we choose elements $\pi_i$ such that $\pi\in \mathfrak{p}_i\setminus \mathfrak{p}_i^2$ ($\pi_i$ uniform element with respect to the valuation of 
$Z$ corresponding to $\mathfrak{p}_i$) and $\pi_i\equiv 1 \pmod{\mathfrak{p}_j}$ for $j\neq i$ (this is possible by the Chinese remainder theorem), then 
$\overline{\mathfrak{b}}=(\prod_{i=1}^r \overline{\pi}_i^{l_i})$.
\end{proof}

Let us note that the groups of units $U(R)$ of $R$ and $U(\prod_{i=1}^r R_i)=\prod_{i=1}^r U(R_i)$ are also isomorphic under the map $\phi$, and each group $U(R_i)$ is cyclic and isomorphic to the group of units of the finite field $Z/\mathfrak{p}_i$.

It might appear that due to the above Proposition, when we pass from the ring $Z$ that is not a unique factorization domain (when its class number $h>1$) to the residue class ring 
$R=Z/\mathfrak{a}$, the bad properties of factorization in $Z$ do not carry over to $R$ since $R$ is a principal ideal ring.

However, the complexity of the divisor theory of $R$, together with a suitable choice of $\mathfrak{a}$, indeed influences the complexity of the structure of $R$.
If the number $r$ of primary factors $\mathfrak{p}_i^{k_i}$ and/or the exponents $k_i$ are high and the factorization of $\mathfrak{a}$ is not available, it will be difficult to derive the structure of $R$ effectively from the ring $Z$ and its ideal $\mathfrak{a}$. 

Determining the abstract structure of an arbitrary finite ring $R$ seems to be even more complicated problem.
 
Based on this disussion, we should choose the residue ring $R$ and its ideal $\mathfrak{a}$ in such a way that the number $r$ of primary components of $\mathfrak{a}$ is high and the 
prime factorization $\mathfrak{p}_1^{k_1}\ldots \mathfrak{p_r}^{k_r}$ of $\mathfrak{a}$ in $Z$ is complicated.

We will consider only the case when $R=Z/(m)$ for the ring of algebraic integers of a number field $F$ and integer $m>1$. The choice of $F$ and $m$ cannot be independent since even 
if $Z$ has a complicated theory of divisors, choosing wrong $m$ can create $R$ that is rather simple. 

We will now modify previously defined cryptosystem based on invariants of diagonalizable matrices to the case when $R$ is the residue class ring $R=Z/(m)$ where
$Z$ is the ring of algebraic integers of a number field $F$ and $m$ is an integer.

However, we start with $F$ with complicated divisor theory and then look for appropriate 
$m$ so that the structure of the residue ring $Z/(m)$ contains the complication of the factorization in $Z$ and also problem of the factorization of the modulus $m$.

Let $m=p_1^{d_1}\ldots p_s^{d_s}$ and each $p_i$ decomposes as $p_i=\mathfrak{p}_{1,i}^{e_{1,i}}\ldots \mathfrak{p_{s,i}}^{e_{s,i}}$ in $Z$.
Let us consider the following cases for $m$.

1) $m$ is square-free. In particular, one appealing choice is when $m$ is a product of two large primes $p_1$ and $p_2$.
In this case the ring $R$ will be a direct sum of finite fields corresponding to residue class rings of $Z$ by unramified prime divisors of $m$.
We should choose the prime factors of $m$ to be unramified in $F$ because otherwise it would be easier to factor $m$ by considering the greatest common divisor of
$m$ and the discriminant of $Z$. Special case to consider is when either $p_1$, $p_2$ or both are totally unramified.


If we want to involve local rings that are not finite fields in the decomposition of $R$, we have additional choices for $m$.

2) We can choose $m$ that is not square-free. For example, we can choose $m=(p_1p_2)^2$, where both $p_1$ and $p_2$ are large primes. In this case we have the factorization problem for $p_1p_2$ but since $m$ is not square-free, it is much easier to find prime factorization of $m$ when compared to the square-free case. 

If we consider a more general case and replace the integer $m$ by an ideal $\mathfrak{a}$ (principal or not) of $Z$ such that 
$\mathfrak{a} =\mathfrak{p}_1^{k_1}\ldots \mathfrak{p}_r^{k_r}$, then the structure of $U(Z/\mathfrak{a})$ is even more complicated because 
the group of units of the local rings $Z_i/(\mathfrak{p}_{i}^{k_i})$ might have high $p$-rank if the exponents $k_i$ are high.

\subsubsection{Choice of the group $G$} 
Assume that we have already chosen $Z$, $m$ and the corresponding complicated residue ring $R=Z/(m)$. 
Let us modify the cryptosystem introduced in \ref{design} in such a way that instead of working inside the ring $Z$ we will work inside the residue ring $R=Z/(m)$.

We will assume that the entries in the matrices representing generators $g_i$ of $G$ are units of $R$.
Assume that the group of units $U$ of the ring $R$ has a basis given by $u_1, \ldots, u_r$ of respective orders $o_1, \ldots, o_r$,
generators $g_i$ of $G$ are given as 
$g_i=\begin{pmatrix} r_{i1} &0 & \ldots &0 \\0&r_{i2}& \ldots&0\\\ldots &\ldots&\ldots&\ldots\\0&0&\ldots&r_{in}\end{pmatrix}$ for $i=1, \ldots t$,
where $r_{ij}=u_1^{l_{ij1}}\ldots u_r^{l_{ijr}}$ for appropriate exponents $l_{ijk}$, and the set $S$ consists of vectors 
$\vec{v}_i=\begin{pmatrix}a_{i1}\\a_{i2}\\\ldots \\a_{in}\end{pmatrix}$ for $i=1, \ldots, s$.
The general element of $G$ is written as $g=g_1^{y_1}\ldots g_t^{y_t}$ for some integers $y_1, \ldots, y_t$.

A monomial $f=x_1^{d_1}x_2^{d_2}\ldots x_n^{d_n}$ separates $\vec{v}_i$ and $\vec{v}_j$ if and only if 
\[a_{i1}^{d_1}a_{i2}^{d_2}\ldots a_{in}^{d_n}\neq a_{j1}^{d_1}a_{j2}^{d_2}\ldots a_{jn}^{d_n}.\]

Since every invariant of $G$ is a sum of monomial invariants, to obtain a complete description of all invariants of $G$ we only need to find monomial invariants.
A monomial $f=x_1^{d_1}x_2^{d_2}\ldots x_n^{d_n}$ is an invariant of $G$ if 
$\prod_{i=1}^t r_{i1}^{y_id_1}r_{i2}^{y_id_2}\ldots r_{in}^{y_id_n}=1$ for all integers $y_1, \ldots, y_t$. 
This implies 
$r_{i1}^{d_1}r_{i2}^{d_2}\ldots r_{in}^{d_n}=1$ for each $i=1, \ldots, t$.

For fixed $i$, the condition $r_{i1}^{d_1}r_{i2}^{d_2}\ldots r_{in}^{d_n}=1$ gives 
$u_1^{\sum_{j=1}^n d_jl_{ij1}}\ldots u_r^{\sum_{j=1}^n d_jl_{ijr}}=1$
and that is equivalent to the system of congruences
\[d_1l_{i1k}+d_2l_{i2k}+\ldots + d_nl_{ink}\equiv 0 \pmod{o_k}\]
for each $k=1, \ldots, r$.

Since $i$ runs from $1$ to $t$, the condition that $f$ is an invariant of $G$ is equivalent to $t$ such systems. In total we obtain a system of $rt$ congruences in variables
$d_1, \ldots, d_n$.

If we assume that numbers $l_{ijk}$ are determined, then the last system can be solved using integer programming since it is equivalent to 
the system of $rt$ equations 
\[d_1l_{i1k}+d_2l_{i2k}+\ldots + d_nl_{ink}+ o_kd_{ik} =0\]
for each $i=1, \ldots, t$ and $k=1, \ldots r$
in integer variables $d_1, \ldots, d_n$ and $rt$ variables $d_{ik}$.

When we worked over the finite field $F=GF(q)$ we were able to determine the corresponding coefficients $l_{ij}$ using discrete logarithms. In the case of finite rings $R$ we do not have such a tool at our disposal. If we want to determine coefficients $l_{ijk}$, first we need to determine the structrure of the ring $R$ and its group of units. 
Papers \cite{sutherland} and \cite{bjt} provide algorithms that perform this task with the complexity $O(|R|^{\frac12})$.

Even when the basis $u_1, \ldots, u_r$ of $U$ is known it is not clear how to determine the coefficients $l_{ijk}$. This issue can make the use of the cryptosystems over finite rings more appealing.

Important note is that all the difficulties we have just described are needed for breaking of the cryptosystem. However for the design of the cryptosystem we do need to know neither 
a basis $u_1, \ldots, u_r$ of $U$ nor the exponents $l_{ijk}$. All that is required is to check that the diagonal entries are units in the residue ring $R$. 
Since our ring $R$ is the residue class of $Z$ modulo $m$, this condition can be verified by computing the norm of these entries. 
If all the norm of the entry is coprime to $m$, then its image in $R$ belongs to $U$.  

\subsubsection{Conclusion}

The breaking of the modified cryptosystem designed over $R$ seems to require techniques going beyond discrete logarithm problem. More work that is required to specify the parameters of the cryptosystem that provide its sought-after security is beyond the scope of this paper. We hope that we have convinced the reader that this is a worthwhile endeavour to undertake.   

In the papers \cite{dima1} and \cite{dima2} the cryptosystem based on invariants of groups over a field $F$ were considered. The above modification of our cryptosystem works over finite rings instead of fileds. The reason why we need the ring structure is because we want to use the matrix multiplication and conjugation by a matrix $P$ to hide the group $G$ as suggested by Grigoriev. If it is determined that the conjuagation by $P$ does not increase the security of the cryptosystem, then we can consider a more general setup and instead of working over finite rings we could work over finite multiplicative groups. That is another direction for future investigation.


\end{document}